\newcommand{\vdd}[1]{
}
\newcommand{\anca}[1]{
}
\newenvironment{whoevercomesin}{\noindent\color{green} New guru : }{}
\newcommand{\refthm}[1]{Thm.~\ref{#1}}
\newcommand{\reflem}[1]{Lem.~\ref{#1}}
\newcommand{\Reach}{\mathop{\mathrm{Reach}}}
\newcommand{\PSPACE}{\ensuremath{\mathrm{PSPACE}}\xspace}
\newcommand{\set}[2]{\{#1 \mid #2\}}
\newcommand{\oneset}[1]{\{\mathinner{#1}\}}
\newcommand{\os}[1]{\{\mathinner{#1}\}}
\newcommand{\smallset}[1]{\{\mathinner{#1}\}}
\newcommand{\abs}[1]{\left|\mathinner{#1}\right|}
\newcommand{\N}{\mathbb{N}}
\newcommand{\R}{\mathbb{R}} 
\renewcommand{\P}{\mathbb{P}}
\newcommand{ \ov}[1]{ \overline{#1}\, }
\newcommand{\IFF}{if and only if\xspace}
\newcommand{\sse}{\subseteq}
\newcommand{\es}{\emptyset}
\newcommand{\sm}{\setminus}
\renewcommand{\phi}{\varphi}
\renewcommand{\SS}{\Sigma}
\newcommand{\GG}{\Gamma}
\newcommand{\oo}{\omega}
\newcommand{\alp}{\alpha}
\newcommand{\del}{\delta}
\newcommand{\Sig}{\Sigma}
\newcommand{\Gam}{\Gamma}
\newcommand{\alf}{\mathop{\mathrm{alph}}}
\newcommand{\alfinf}{\mathop{\mathrm{alphinf}}}
\newcommand{\Aa}{\mathcal{A}}
\newcommand{\Bb}{\mathcal{B}}
\newcommand{\Cc}{\mathcal{C}}
\newcommand{\Ll}{\mathcal{L}}
\newcommand{\Ff}{\mathcal{F}}
\newcommand{\Mm}{\mathcal{M}}
\newcommand{\Lin}{\textit{Lin}}
\newcommand{\dom}{\textit{dom}}
\newcommand{\tuple}[1]{\langle #1 \rangle}
\newcommand{\de}{\delta}
\newcommand{\D}{\Delta}
\newcommand{\Proc}{\textit{Proc}}
\newcommand{\act}[1]{\stackrel{#1}{\longrightarrow}}
\newcommand{\pref}{\textit{pref}}
\newcommand{\bean}{\begin{eqnarray*}}
\newcommand{\eean}{\end{eqnarray*}}
\newcommand{\locmon}{locally monitorable\xspace}
\newcommand{\moni}{monitorable\xspace}
\newcommand{\PGd}{\mathrm{PG}_{\delta}}
\newcommand{\PFs}{\mathrm{PF}_{\sigma}}
\begin{document}
\title{On distributed monitoring of asynchronous systems}

\author{Volker Diekert\inst{1} \and Anca Muscholl\inst{2}}
\institute{Universit\"at Stuttgart, FMI, Germany 
 \and LaBRI, Univ. of Bordeaux, France
}

\maketitle

\section{Introduction}

Distributed systems are notoriously difficult to understand and
analyze in order to assert their correction w.r.t.~given
properties. They often exhibit a huge number of different behaviors,
as soon as the active entities (peers, agents, processes, \ldots)
behave in an asynchronous manner. Already the modelization of such
systems is a non-trivial task, let alone their formal verification.

Several automata-based distributed models 
have been proposed and studied over the past twenty years,
capturing various aspects of distributed behavior.  Depending on
the motivation, such models fall into two large categories. In the first
one we find rather simple models, expressing basic synchronization
mechanisms, like Petri nets or communicating automata. In the
second category we see more sophisticated models, conceived
for supporting practical system design, like statecharts or
I/O automata. It is clear that being able to develop
automated verification techniques requires a good understanding of the
simpler models, in particular since more complex ones are often built
as a combination of  basic models. 

The purpose of this paper is to discuss the problem of distributed
monitoring on a simple model of finite-state distributed automata
based on shared actions, called \emph{asynchronous automata}. Monitoring
is a question related to runtime verification: assume that we have
to check a property $L$ against an unknown or very complex system
$\Aa$, so that classical static analysis is not possible. Therefore
instead of model-checking a~\emph{monitor} is used, that checks the
property on the underlying system at runtime. The question is which
properties can be checked in this way, that is, which properties  $L$
are \emph{monitorable}. A classical
example for monitorable properties are safety properties, like ``no alarm is raised''. A monitor
for a property $L$ is an automaton $\Mm_L$ that after each finite
execution tells whether (1) every possible extension of the execution
is in $L$, or (2) every possible extension is in the complement 
of $L$, or neither (1) nor (2)
holds. 
The notion of monitorable  properties has been proposed by Pnueli and
Zaks~\cite{PnueliZ06}, and the theory has been extended to various kinds of
systems, for instance to probabilistic systems~\cite{ChadhaSV09jacm,GPS09}
or real-time systems~\cite{BauerLS06b,LeuckerBS10Tosem}.
 
We are interested here in monitoring distributed systems modelled as
asynchronous automata. It is natural to require that monitors  should be of the same
kind as the underlying system, so we consider here distributed monitoring. A distributed monitor does not have a global view of the system,
therefore we propose the notion of \emph{\locmon} trace language.  Our main
result shows that if the distributed alphabet of actions is connected and if $L$
is a set of $\GG$-infinite traces (for some subset of processes $\GG$)
such that both $L$ and its complement $L^c$
are countable unions of locally safety languages, then $L$ is
\locmon. We also show that over $\GG$-infinite traces, recognizable countable
unions of locally safety languages are precisely the complements of 
deterministic languages.


\section{Preliminaries}\label{sec:prel} 

The idea of describing concurrency by a fixed independence relation on
a given set of actions $\SS$ goes back to the late seventies, to
Mazurkiewicz~\cite{maz77} and Keller~\cite{kel73} (see
also~\cite{DieRoz95}). One can start with a~\emph{distributed action
  alphabet} $(\SS,\dom)$ on a finite set $\Proc$ of processes, where
$\dom:\SS \to (2^{\Proc}\setminus \es)$ is a \emph{location
  function}. The location $\dom(a)$ of action $a \in\SS$ comprises all
processes that need to synchronize in order to perform this action. It
defines in a natural way an~\emph{independence relation} $I \subseteq
\SS\times\SS$ by letting $(a,b) \in I$ \IFF $\dom(a) \cap \dom(b)
=\es$. 

The execution order of two independent actions $(a,b)\in I$ is
irrelevant, they can be executed as $a,b$, or $b,a$ - or even
concurrently. More generally, we can consider the congruence $\sim_I$
on $\SS^*$ generated by $I$. An equivalence class $[w]_I$ of $\sim_I$
is called a (finite) Mazurkiewicz \emph{trace}, and it can be also
viewed as labeled pomset $t=\tuple{V,\le,\lambda}$ of a special kind: if
$w=a_0 \cdots a_n$ then the vertex set is $V=\oneset{0,\ldots,n}$, the
labeling function is $\lambda(i)=a_i$ and
$\le \mathop{=} (\set{(i,j)}{i<j, (a_i,a_j) \notin I})^*$ is the
partial order. The word
$w$ is a~\emph{linearization} of $t$ defined as
above, i.e., a total order compatible with the partial order of $t$. 

\emph{Infinite traces} can be defined is a similar way from
$\oo$-words. Finite and infinite traces are also called~\emph{real
  traces}, and the set of real traces is written
$\R(\SS,I)$ (or simply $\R$ when $\SS,I$ are clear from the
context). A trace $t$ is a \emph{prefix} of a trace $t'$ (denotes as $t \le t'$)
if $t$ is isomorphic to a downwards-closed subset of $t'$. The set of
prefixes of $t$ is denoted $\pref(t)$.
If $L \subseteq \R$ then we denote by $\Lin(L) \subseteq
\SS^\infty$ the set of linearizations of traces from $L$.

A language $K \subseteq \SS^\infty$ is called \emph{trace-closed} if
$K=\Lin(L)$ for some $L \subseteq \R$. Whenever convenient, we talk
about trace languages $L \subseteq \R$ or trace-closed word languages
$K \subseteq \SS^\infty$ in equivalent terms. A language $L \subseteq \R$ is
\emph{recognizable} if $\Lin(L) \subseteq \SS^\infty$ is a regular
language of finite and infinite words.

Linear temporal properties like \emph{safety} and
\emph{liveness}~\cite{Pnu77}  can
be translated into topological properties, as
closed and dense sets in the Cantor topology. For real traces, these
notions generalize smoothly to the Scott topology, by replacing word
prefixes by trace prefixes. The Scott topology corresponds to a global
view in traces, where one needs to reason on global configurations,
i.e., configurations involving several processes. However, in the
setting of monitoring  that we discuss here, such a global
view is not available. Therefore we use here \emph{local safety} as
basic notion, as introduced in~\cite{dg09thiagu} and explained in the
following.

A trace $t=\tuple{V,\le,\lambda}$ is called~\emph{prime} if it is
finite and has a unique maximal element. That is, $|\max(t)|=1$, where
$\max(t)$ is the set of maximal elements of $t$ w.r.t.~the partial
order $\le$. The set of prime traces in
$\R$ is denoted $\P(\R)$. The set of prime prefixes of elements of $L \subseteq
\R$ is denoted $\P(L)$.

\begin{definition} Let $L \subseteq \R$.
  \begin{enumerate}
  \item $L$ is called \emph{prime-open} if it is of the form
  $\bigcup\set{p\R}{p \in U}$ for some $U \sse \P$. Complements of
  prime-open sets are called \emph{prime-closed}. 
\item $\ov L$ is the intersection of all prime-closed sets containing
  $L$ (and denoted as prime-closure of $L$).  Note that $\ov L$ is
  prime-closed.  
\item A prime-closed, recognizable language $L \subseteq \R$ is called
  a~\emph{locally safety language}.
  \end{enumerate}
  \end{definition}

\begin{remark}
  \begin{enumerate}
  \item  Every prime-open set is also Scott-open, and prime-open sets are
  closed under union, but not under intersection. As an example
  consider $a\R \cap
  b\R$ which is not prime-open for $(a,b)\in I$.
\item A first-order locally safety language $L \subseteq \R$ is 
 a prime-closed set such that $\Lin(L)$ is a first-order language. It
 is known from~\cite{dg09thiagu} that first-order locally safety languages are
 characterized by formulas of the form $G \, \psi$, with $\psi$ a past
 formula in a local variant of LTL called LocTL. 
  \end{enumerate}
 \end{remark}

\noindent
We end this section by introducing our model for distributed automata.
An \emph{asynchronous automaton}
$\Aa=\tuple{(S_\alp)_{\alp\in \Proc},s_{in},(\delta_a)_{a\in\SS}}$ is
given by 
\begin{itemize}
\item for every process $\alp$ a finite set $S_\alp$ of (local) states,
\item the initial state $s_{in} \in \prod_{\alp \in \Proc} S_\alp$, 
\item for every action $a \in\SS$ a transition relation $\delta_a
  \subseteq (\prod_{\alp\in \dom(a)}S_\alp)^2$ on tuples of states of
  processes in $\dom(a)$.
\end{itemize}

For convenience, we abbreviate a tuple $(s_\alp)_{\alp \in P}$ of local
states by  $s_P$,  where $P \subseteq \Proc$. We also  denote $\prod_{\alp \in \Proc} S_\alp$ as
\emph{global states} and $\prod_{\alp \in P} S_\alp$ as $S_P$. 

An asynchronous automaton can be seen as a sequential automaton with the
state set $S = \prod_{\alp\in\Proc} S_\alp$ and transitions $s \act{a} s'$ if
$(s_{\dom(a)}, s'_{\dom(a)}) \in \delta_a$, and $s_{\Proc\setminus
\dom(a)}=s'_{\Proc\setminus \dom(a)}$. By $\Ll(\Aa)$ we denote the set of
words labeling runs of this sequential automaton that start from the
initial state. It can be easily noted that $\Ll(\Aa)$ is
trace-closed. The automaton is \emph{deterministic} if each $\delta_a$
is a (partial) function.

\begin{figure}[htbf]
\centering
\includegraphics[scale=.5]{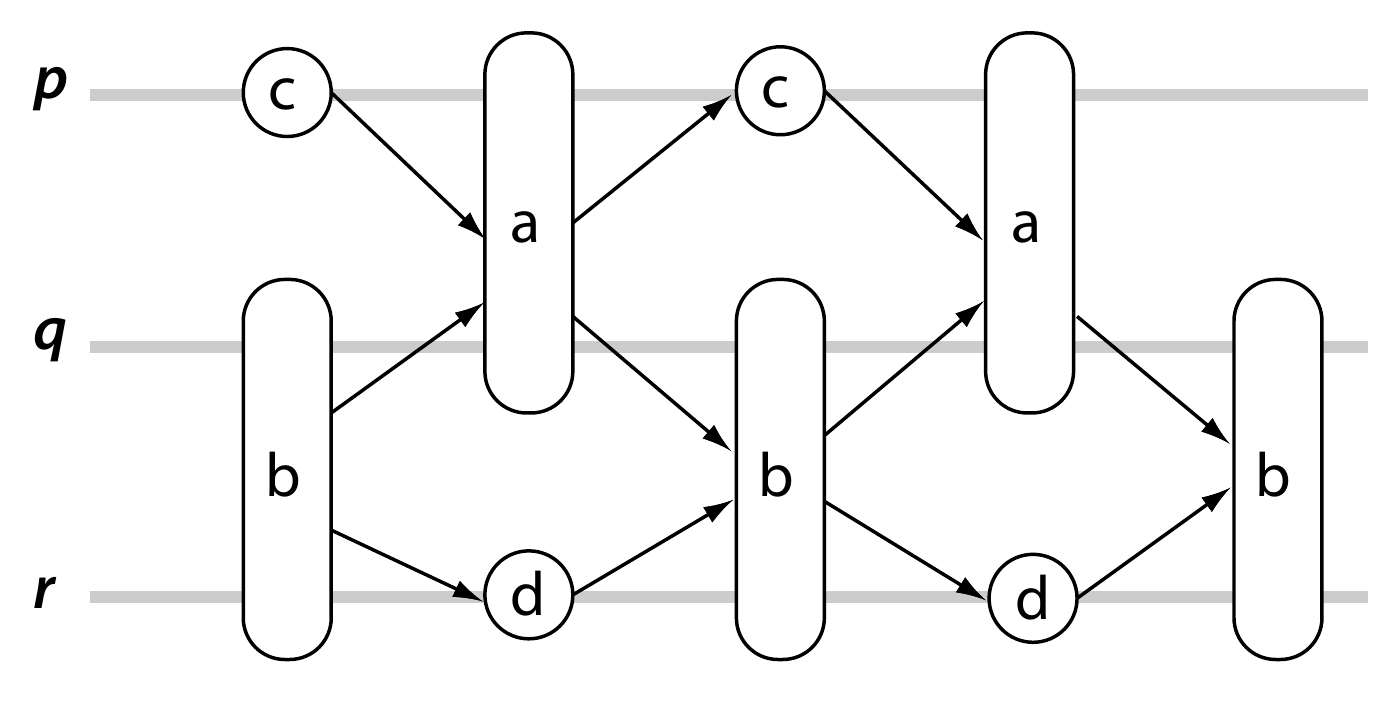}
   \caption{The pomset associated with the trace
     $t=[c\,b\,a\,d\,c\,b\,a\,d\,b]$,  
   with $\dom(a)=\{p,q\}$, $\dom(b)=\{q,r\}$, $\dom(c)=\{p\}$,
   $\dom(d)=\{r\}$. }   
  \label{fig:trace}
\end{figure} 

\begin{example}
  Let us consider the asynchronous automaton $\Aa$ given by $S_p=\{0\}$,
  $S_q=S_r=\{0,1\}$, and transition function $\delta_a(s_p,s_q)=(s_p,\neg
  s_q)$ if $s_q=1$ (undefined otherwise), $\delta_d(s_r)=\neg s_r$ if
  $s_r=1$ (undefined otherwise), $\delta_b(s_q,s_r)=(1,1)$ if $s_q \wedge
  s_r=0$  (undefined otherwise) and $\delta_c(s_p)=s_p$. Starting with
  $s_0=(0,0,0)$, an accepting run of $\Aa$ checks that between any two
  successive $b$-events, there is either an $a$ or a $d$ (or both),
  and there is a $b$-event before all $a$ and $d$.
\end{example}

Since the notion of a trace was formulated without a reference to an
accepting device, it is natural to ask if the model of asynchronous
automata is powerful enough for capturing the notion of
regularity. Zielonka's theorem below says that this is indeed the case,
hence these automata are a right model for the simple view of
concurrency captured by Mazurkiewicz traces.

\begin{theorem}\cite{zie87}\label{th:zielonka}
  Let $\dom :\SS\to(2^\Proc\setminus\oneset{\es})$ be a distribution of
  letters. 
  If a language $L\subseteq \SS^*$ is regular and trace-closed then there is a
  deterministic asynchronous automaton accepting $L$ (of size exponential
  in the number of processes and polynomial in the size of the minimal
  automaton for $L$, see~\cite{ggmw10}). 
\end{theorem}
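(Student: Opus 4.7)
The plan is to start from a minimal deterministic sequential automaton $\Aa_L = \tuple{Q, q_0, \delta, F}$ for $L$ and distribute it over the processes of $\Proc$. Because $\Ll(\Aa_L) = L$ is trace-closed, the state reached after reading a linearization of a trace $t$ depends only on $t$, not on the particular linearization, so one can speak of $\delta^*(q_0, t) \in Q$. The goal is to build a deterministic asynchronous automaton in which, after executing $t$, each process $\alp$ stores enough information to recover $\delta^*(q_0, t)$ as soon as all processes synchronize (and to detect acceptance whenever the currently active processes suffice to rule on membership). The central obstacle, as always with Zielonka's theorem, is that process $\alp$ sees only its \emph{view} of $t$, namely the downward closure $\view_\alp(t)$ of the last $\alp$-event in $t$, while the target state $\delta^*(q_0,t)$ depends on the entire trace.

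The key tool I would use is Zielonka's \emph{time-stamp} or \emph{gossip} construction. For each ordered pair of processes $(\alp,\bet)$, define $k_{\alp,\bet}(t)$ as the number of $\bet$-events in $\view_\alp(t)$; that is, the most recent $\bet$-event that $\alp$ currently knows about. The secondary information $K_{\alp,\bet,\gam}(t) = k_{\bet,\gam}(\view_\alp(t))$ records, from $\alp$'s perspective, what $\bet$ knows about $\gam$. I would have each process store, as part of its local state, (i) the sequential DFA state corresponding to $\view_\alp(t)$, and (ii) a bounded encoding of the time-stamp matrix $(K_{\alp,\bet,\gam}(t))_{\bet,\gam \in \Proc}$. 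The crucial lemma of Zielonka is that although the raw counts $k_{\alp,\bet}$ grow unboundedly, only their relative ordering is needed: when processes in $\dom(a)$ meet, each of them can decide, for every other process $\gam$, which member of $\dom(a)$ previously had the freshest information on $\gam$. This reduces $K$ to a finite object of size polynomial in $|\Proc|$, and therefore bounded overall.

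The transition $\delta_a$ of the asynchronous automaton for an action $a$ is then defined on the tuple of local states of the processes in $\dom(a)$ as follows. First, the participants merge their views by comparing their $K$-matrices: for each $\gam \notin \dom(a)$, the participant with the largest $k_{\cdot,\gam}$ contributes its portion of the view, and these portions fit together in a canonical way because merged views still form a prefix of $t$. From the merged DFA state and the new letter $a$ one obtains the DFA state after $t\cdot a$, which becomes the new local DFA state of every process in $\dom(a)$. The $K$-matrix is updated by taking coordinate-wise maxima over $\dom(a)$ and then adjusting to reflect the new $a$-event. Acceptance is declared when all processes' stored DFA states point to accepting states of $\Aa_L$; an alternative is to enforce synchronization at the end via a cleanup phase, which preserves trace-closedness and determinism.

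The hard part is exactly the combinatorial heart of Zielonka's argument: showing that the merged views assemble correctly and that the finite encoding of time-stamps carries enough information to recover, after a synchronization on $a$, the exact restriction of $t\cdot a$ that each participant should attribute to each other process. I would prove this by induction on the length of linearizations of $t$, using the fact that in a trace the view $\view_\alp(t\cdot a)$ for $\alp \in \dom(a)$ equals $a$ appended to $\bigcup_{\bet \in \dom(a)} \view_\bet(t)$, together with the observation that two views either are comparable by prefix or come from independent parts of $t$ (a property distinguishable from the $K$-data alone). Once this combinatorial core is in place, the size bound stated in \refthm{th:zielonka} follows by counting: the number of local states is the number of DFA states times a factor exponential only in $|\Proc|$ coming from the time-stamp encoding.
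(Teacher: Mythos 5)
The paper does not prove this statement at all: it is quoted as Zielonka's theorem from \cite{zie87}, with the size bound taken from \cite{ggmw10}. So there is no proof in the paper to compare against, and your attempt has to be judged on its own.

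As it stands it is a plan, not a proof. You correctly identify the classical route (views, time-stamps, the gossip problem) and you correctly identify where the difficulty lies, but you then defer exactly that part: the statement that the bounded encoding of the matrices $K_{\alp,\bet,\gam}$ suffices to determine, at a synchronization on $a$, which participant holds the freshest information about each $\gam$, \emph{and} that this comparison can itself be maintained by purely local updates, is the entire content of Zielonka's theorem. Saying ``I would prove this by induction on the length of linearizations'' does not discharge it; the induction is delicate precisely because the naive invariant is not inductive (the time-stamps of processes outside $\dom(a)$ are not refreshed, so the relative-order information degrades and one must argue it degrades in a controlled way). There is also a second, more concrete gap in the final step: knowing the DFA state $\delta^*(q_0,\view_\bet(t))$ for each $\bet\in\dom(a)$ does not determine the DFA state of the merged view $\bigsqcup_{\bet\in\dom(a)}\view_\bet(t)$. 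To compose, you need to know how the part of $\view_\bet(t)$ lying outside the common prefix acts as a \emph{function} on $Q$, or equivalently you must store states for a whole lattice of partial views indexed by sets of processes. This is where the classical constructions pay a price that is doubly exponential in $|\Proc|$ (or exponential in $|Q|$), and it is exactly why the bound stated in the theorem --- polynomial in the minimal DFA and singly exponential in $|\Proc|$ --- required the genuinely different construction of \cite{ggmw10} rather than a refinement of the time-stamp bookkeeping. So your closing size count does not follow from the construction you sketch.
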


\section{Safety languages}\label{sec:basics}

A set of traces $C \subseteq \R$ is called \emph{coherent} if $C \subseteq
\pref(t)$ for some $t \in \R$. This means that $\sqcup C \in \R$
exists, and it is a prefix of $t$. By $L^c$ we denote the complement
$\R\sm L$ of $L$. Recall that $\P(L)$ is the set of prime prefixes of
traces in $L \subseteq \R$.

We use in our characterizations below a basic property of automata on traces, which is for
instance satisfied by (runs of) asynchronous automata, called \emph{forward
  diamond property}. A set $K \subseteq \SS^*$ satisfies the forward
diamond property if the following holds:
    \begin{quote}
      If $ua \in K$ and $ub \in K$, then $uab \in K$, for every $u
  \in\SS^*$ and $(a,b)\in I$.
    \end{quote}

\begin{lemma}\label{lem:a}
For $L \subseteq \R$ we have 
\[\ov L = \set{\sqcup C}{C\sse \P(L) \text{ and }   C \text{ is
    coherent}} \,.
\]
We have $\ov L = \ov K$ \IFF  $\P( L) = \P(K)$. 

\end{lemma}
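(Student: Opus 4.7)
The plan is to introduce an auxiliary characterization of prime-closed sets in terms of their prime prefixes, and then to show the two inclusions $\ov L \subseteq A$ and $A \subseteq \ov L$, where $A$ denotes the right-hand side of the claimed identity. First I would prove the following characterization: a set $F\sse \R$ is prime-closed if and only if $F = \set{t \in \R}{\P(\{t\}) \sse \P(F)}$. The ``if'' direction is immediate from the definition, since then $F = \R\sm \bigcup\set{p\R}{p \in \P\sm \P(F)}$. For the ``only if'' direction, write $F = \R\sm\bigcup\set{p\R}{p \in U}$, and observe that no $p\in U$ can lie in $\P(F)$ (otherwise some $t \in F$ would satisfy $t \in p\R$, contradicting $t \in F$). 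From this, $t \in F$ iff no prime prefix of $t$ belongs to $U$, and this is equivalent to $\P(\{t\})\sse \P(F)$ because every prime $p \notin \P(F)$ either lies in $U$ (so excludes $t$) or else witnesses the same via the argument above.

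Now let $A = \set{\sqcup C}{C \sse \P(L),\ C \text{ coherent}}$. The inclusion $L \sse A$ is obtained by taking $C := \P(\{t\})$ for each $t \in L$: this set is coherent (its join is $t$ itself) and is contained in $\P(L)$. Next I would verify $\P(A) = \P(L)$. One direction is clear since any $p \in \P(L)$ already lies in $A$ (take $C = \{p\}$). Conversely, if $p$ is a prime prefix of $t = \sqcup C$ with $C\sse \P(L)$ coherent, then the unique maximal vertex $v$ of $p$ lies in some $c\in C$; since $c$ is a prefix of $\sqcup C$ (hence downward-closed in it), the downward closure of $v$ in $\sqcup C$ coincides with the downward closure of $v$ in $c$, so $p$ is a prime prefix of $c$, and by transitivity of $\le$, $p$ is a prime prefix of whichever $t'\in L$ has $c$ as a prime prefix. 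Combined with the characterization above and the trivial inclusion $\P(A)\sse \P(A)$, this yields $A = \set{t \in \R}{\P(\{t\}) \sse \P(A)}$, so $A$ is prime-closed; since $L\sse A$, we obtain $\ov L \sse A$.

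For the reverse inclusion $A\sse \ov L$, let $F$ be any prime-closed set with $L \sse F$; then $\P(L)\sse \P(F)$. Given $t = \sqcup C$ with $C \sse \P(L)$ coherent, the argument used in the previous paragraph shows that every prime prefix of $t$ lies in $\P(L)\sse \P(F)$, so by the characterization of prime-closed sets, $t \in F$. Hence $A \sse F$ for all such $F$, i.e.\ $A\sse \ov L$. This proves the displayed equality. For the second statement, the identity $\P(\ov L) = \P(L)$ follows from $\ov L = A$ and $\P(A) = \P(L)$ proved above, and similarly $\P(\ov K) = \P(K)$; hence $\ov L = \ov K$ implies $\P(L) = \P(K)$. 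The converse is immediate from the displayed formula, which exhibits $\ov L$ as depending only on $\P(L)$.

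The main obstacle, and the place where one must be careful, is the identity $\P(A) = \P(L)$: to go from a prime prefix of a join $\sqcup C$ down to a prime prefix of some single $c\in C$ (hence of some $t'\in L$) requires the observation that a prefix of a prefix is a prefix, combined with the fact that each $c \in C$ sits as a downward-closed subpomset of $\sqcup C$. Once this step is clean, everything else is formal manipulation with the characterization of prime-closed sets.
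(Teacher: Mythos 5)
Your proof is correct and follows essentially the same route as the paper: show the right-hand side is prime-closed and contains $L$, then show it is contained in every prime-closed superset of $L$; your auxiliary characterization of prime-closed sets and the descent of a prime prefix of $\sqcup C$ to a prime prefix of some $c\in C$ are exactly the facts the paper's terser proof uses implicitly (e.g.\ in asserting $X^c=U\R$ with $U=\P\sm\P(L)$ ``by definition''). No issues.
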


\begin{proof}
  Let $X= \set{\sqcup C}{C\sse \P(L) \text{ and } C \text{ is
      coherent}}$. By definition, $X^c=U\R$ with $U=\P\sm \P(L)$, thus
  $X$ is prime-closed (and contains $L$). Let $K \supseteq L$ be
  prime-closed, thus $K^c=V\R$ with $V \subseteq \P$. Consider some
  coherent set $C \subseteq \P(L)$, and assume that $\sqcup C \in v\R$
  for some $v \in V$. But then $v \in \P(L)$, thus $K^c \cap L \not=
  \emptyset$, a contradiction. So $X \subseteq K$, which shows
  that $\ov L =X$.
\end{proof}

\begin{lemma}\label{l:regclosure}
  If $L \subseteq \R$ is recognizable, then
the prime closure  $\ov L$ is recognizable,
  too. Moreover, on input $(\SS,\dom)$ and (sequential) B\"uchi
  automaton $\Bb$ such that $L=\Ll(\Bb)$ is trace-closed, we can
  compute an exponential-size, deterministic asynchronous automaton $\Aa$
  accepting $\ov L$, such that all states of $\Aa$ are final.
\end{lemma}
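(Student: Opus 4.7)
The plan is to apply \reflem{lem:a} and note that $t \in \ov L$ iff every prime prefix of $t$ lies in $\P(L)$, equivalently $\ov L^c = U\R$ for $U = \P \sm \P(L)$. In any run of an asynchronous automaton, firing an event $e$ of label $a$ creates exactly one new prime prefix, namely $\downarrow e$. Hence it suffices to produce a deterministic asynchronous automaton $\Aa$ that, at each event firing, blocks exactly when $\downarrow e \in U$; declaring every global state of $\Aa$ final will then make $\Ll(\Aa) = \Lin(\ov L)$.

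I first compute a DFA $M$ for the regular, trace-closed, prefix-closed word language $K = \pref(\Lin(L))$. Starting from $\Bb$, I mark the states from which some accepting run of $\Bb$ exists (finite or infinite), subset-construct, and delete transitions that leave the safe subset; the resulting $M$ has only accepting states, and its transition function is defined on exactly the words in $K$. The key observation is that for a \emph{prime} trace $p$, one has $p \in \P(L)$ iff some (equivalently, by trace-closedness of $K$, every) linearization of $p$ lies in $K$: indeed, $p \le t$ for some $t \in L$ iff a linearization of $p$ extends to a linearization of $t$.

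Next, I apply \refthm{th:zielonka} to $M$ to obtain an exponential-size deterministic asynchronous automaton $\Aa_K$ with $\Ll(\Aa_K) = K$. A crucial property of this construction, inherited from the gossip automaton of Mukund--Sohoni that underlies it, is that at each $a$-step the joint local state of the processes in $\dom(a)$ determines the $M$-state reached on their merged view $\bigsqcup_{\alp \in \dom(a)} \view_\alp(t)$, and the transition $\delta_a^{\Aa_K}$ is defined iff $\delta_a^M$ is defined on that $M$-state. I take $\Aa$ to be $\Aa_K$ with every global state declared accepting.

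To finish, note that when an $a$-transition fires a new event $e$, the merged view of $\dom(a)$ becomes exactly $\downarrow e$, so $\Aa_K$ (and hence $\Aa$) blocks on this step iff $\downarrow e \notin K$, iff $\downarrow e \notin \P(L)$ by primality of $\downarrow e$. Consequently $\Aa$ blocks on a linearization of $t$ iff some prime prefix of $t$ lies outside $\P(L)$, which by \reflem{lem:a} means $t \notin \ov L$; the size bound is inherited from $\Aa_K$. The main obstacle is justifying the property of Zielonka's construction invoked above, namely that blocking in $\Aa_K$ corresponds exactly to an undefined $M$-transition on the merged view; this is precisely what the gossip-automaton bookkeeping delivers, and it is also the source of the exponential blow-up in $|\Proc|$.
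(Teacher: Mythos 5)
Your proposal is correct and follows essentially the same route as the paper: build a sequential automaton deciding membership of prime prefixes, run Zielonka's construction so that the processes in $\dom(a)$ can locally evaluate the newly created prime prefix $\downarrow e$ (the paper calls this ``classical timestamping'', you call it the gossip property), and forbid exactly the transitions whose new prime prefix falls outside $\P(L)$; the only cosmetic difference is that you feed Zielonka the prefix language $K=\pref(\Lin(L))$ whereas the paper uses a DFA for $\P(L)$ directly. One imprecision worth noting: the domain of definition of any deterministic asynchronous automaton satisfies the forward diamond property while $K$ need not (take $L=\oneset{[a],[b]}$ with $(a,b)\in I$, where $ab\notin K$), so your pruned automaton cannot literally have $\Ll(\Aa_K)=K$ nor ``block exactly outside $K$''; this is harmless, though, because your blocking test only ever queries $M$ on prime traces, where membership in $K$ and in $\P(L)$ coincide, and the resulting domain is exactly $\Lin(\ov L \cap \SS^*)$ as required.
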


\begin{proof}
  Given $L \subseteq \R$ recognizable, we have that $\P(L)$ is
  recognizable, too. Then it is easy to see that $\ov L$ is
  recognizable, by using for instance monadic second-order logic over
  traces.

  Let us consider the complexity of the construction of a
  deterministic asynchronous automaton for $\ov L$ in more detail. We
  assume that the input $L$
  is given by a (sequential) B\"uchi automaton $\Bb$.  We first
  determinize $\Bb$ and get a deterministic (say Rabin) automaton $\Bb'$ for
  $L$. From $\Bb'$ we can easily construct a DFA accepting $\P(L)$: we
  just need to store the set of maximal processes in the control
  state. The resulting DFA is exponential in both $\Bb$ and
  $\Proc$. By applying the construction cited in~\refthm{th:zielonka} we
  obtain a deterministic asynchronous automaton $\Aa$ for $\P(L)$
  which is still exponential in $\Bb$ and
  $\Proc$.  Using classical timestamping we
  may assume that each local state reached by the maximal
  processes of a prime trace contains the complete information about
  the global state of $\Aa$ reached on that prime trace - the size of
  the deterministic asynchronous automaton $\Aa'$ thus obtained
  remains exponential. It remains to construct the automaton accepting
  $\ov L$. Recall that $\ov L$ contains precisely those traces where
  all prime prefixes belong to $\P(L)$. Thus, it suffices to take
  $\Aa'$ and forbid transitions that produce bad local states of
  $\Aa'$, that is, local states that are non-final viewed as global states
  of $\Aa$.  On finite or infinite traces, the automaton $\Aa'$
  accepts precisely $\ov L$. By construction, all its reachable states
are final.

\end{proof}



\begin{proposition}\label{p:locsafety}
  The following are equivalent characterizations for $L \subseteq \R$:
  \begin{enumerate}
  \item $L$ is a locally-safety language.
  \item $K=\Lin(L) \subseteq \SS^\infty$ is a regular, prefix-closed
    language such that $K \cap \SS^\omega$ is a safety language, and
    $K \cap \SS^*$ satisfies the \emph{forward diamond} condition.
\item $L$ is accepted by a deterministic asynchronous automaton where
  all reachable states are final.
  \end{enumerate}
\end{proposition}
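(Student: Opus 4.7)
The plan is to prove the equivalences cyclically, $(1) \Rightarrow (3) \Rightarrow (2) \Rightarrow (1)$. The implication $(1) \Rightarrow (3)$ is immediate from \reflem{l:regclosure}: since $L$ is prime-closed we have $L = \ov L$, and the lemma produces a deterministic asynchronous automaton for $\ov L$ with all reachable states final.

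For $(3) \Rightarrow (2)$, let $\Aa$ be the given automaton and set $K = \Lin(L)$. Regularity and trace-closedness of $K$ are immediate. Prefix-closure: since $\Aa$ is deterministic and every reachable state is final, every prefix of an accepted word leads to a reachable, hence final, state. Safety of $K \cap \SS^\oo$ also comes from determinism: an infinite word $w \notin K$ means the unique attempted run blocks after some finite prefix $u$, so $u \notin K$. The forward diamond property is the standard commutation in asynchronous automata: two independent actions $a, b$ act on disjoint components $\dom(a), \dom(b)$, so whenever $ua, ub \in K$ the transitions compose and $uab$ lands in a reachable, hence final, state.

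The substance of the argument lies in $(2) \Rightarrow (1)$. Recognizability of $L$ follows from regularity of $K$. For prime-closedness we must show $\ov L \sse L$. Given $t \in \ov L$, fix a linearization $w$ of $t$; it suffices to show $w \in K$. If $w$ is infinite, every finite prefix $w_n$ linearizes a finite prefix $t_n$ of $t$, and $t_n \in \ov L$ (its prime prefixes are prime prefixes of $t$, hence in $\P(L)$ by \reflem{lem:a}); once the finite case is established we get $w_n \in K$ for all $n$, and safety of $K \cap \SS^\oo$ yields $w \in K$. The finite case goes by induction on $\abs{t}$. Pick a maximal event $e \in t$ with label $a$ and set $t' = t \sm \oneset{e} \in \ov L$. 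By induction $t' \in L$, so some $w' \in K$ linearizes $t'$. The prime prefix $p$ formed by the downward closure of $e$ lies in $\P(L)$, giving a linearization $v_0 a \in K$ of $p$. Since $p \sm \oneset{e}$ is a prefix of $t'$, extend $v_0$ to a linearization $v_0 z$ of $t'$; trace-closedness gives $v_0 z \in K$. The crucial observation is that every event in $t' \sm (p \sm \oneset{e})$ is incomparable with $e$, because $e$ is maximal, and hence its label is independent of $a$. Writing $z = b_1 \cdots b_k$, iterated application of forward diamond to $v_0 b_1 \cdots b_{i-1} a \in K$ and $v_0 b_1 \cdots b_i \in K$ (with $(a,b_i) \in I$) yields $v_0 b_1 \cdots b_{i-1} a b_i \in K$, and trace-closedness reorders this to $v_0 b_1 \cdots b_i a \in K$. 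After $k$ steps we obtain $v_0 z a \in K$, which is a linearization of $t$, so $t \in L$.

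The main obstacle is the last induction step of $(2) \Rightarrow (1)$: synthesising the arbitrary linearization $w'$ of $t'$ coming from the induction hypothesis with the prime-prefix witness $v_0 a$ is possible only after first choosing a compatible intermediate linearization $v_0 z$ of $t'$ whose tail letters can all be commuted past $a$. The ingredient that makes this work is the maximality of $e$, which forces the events of $z$ to be incomparable with $e$ and therefore carry labels independent of $a$; trace-closedness and forward diamond then do the rest.
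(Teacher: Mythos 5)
Your proof is correct, but it routes the difficulty differently from the paper. The paper proves $(1)\Rightarrow(2)$ and $(3)\Rightarrow(1)$ as immediate and puts all the weight on $(2)\Rightarrow(3)$, which it discharges by citing the synthesis result of~\cite{SEM03}: a regular, prefix-closed, trace-closed language of finite words satisfying the forward diamond condition is accepted by a deterministic asynchronous automaton with all reachable states final; prefix-closure plus safety then lifts this to infinite traces. You instead close the cycle the other way: you get $(1)\Rightarrow(3)$ from \reflem{l:regclosure} (hence ultimately from Zielonka's construction), and you make $(2)\Rightarrow(1)$ the substantive step, proving $\ov L\sse L$ directly by induction on finite traces -- peeling off a maximal event, invoking the prime prefix it generates, and commuting the independent tail letters past it via forward diamond and trace-closure, with the safety hypothesis handling the infinite case. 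This combinatorial argument is essentially the one the paper itself uses for the first-order analogue (\refprop{p:folocsafety}, $(3)\Rightarrow(1)$), and you spell out the commutation in more detail than the paper does there; the maximality of the removed event, which forces the tail labels to be independent of $a$, is exactly the right ingredient. What the two routes buy: yours avoids the extra citation of~\cite{SEM03} and makes the combinatorial content of the forward diamond condition explicit, at the cost of leaning on the Zielonka-based \reflem{l:regclosure} to produce the automaton of $(3)$; the paper's route is shorter and obtains the automaton directly with the complexity guarantees of~\cite{SEM03}. Both proofs share the same mild imprecision in the passage from finite prefixes to infinite words (reading ``$K\cap\SS^\omega$ is a safety language'' as ``an infinite word all of whose finite prefixes lie in $K$ is in $K$''), so no complaint there.
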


\begin{proof}
  The implications $(1) \Rightarrow (2)$ and $(3) \Rightarrow (1)$ are
  immediate. For $(2) \Rightarrow (3)$ let us assume that $K=\Lin(L)$
  is regular, prefix-closed and satisfies the two additional
  conditions in the statement.
  Since $K \cap \SS^*$ is prefix-closed, trace-closed and
  satisfies the forward diamond property, there exists a
  deterministic asynchronous automaton 
  $\Bb$ recognizing $K \cap \SS^*$ (equivalently, the set of finite
  traces in $L$) such that all
  reachable states are final~\cite{SEM03}. Since $K$ is assumed to be
  prefix-closed and $K \cap \SS^\omega$ is a safety language, we
  obtain that  the automaton $\Bb$ accepts precisely $\ov L=L$ over $\R$.
\end{proof}

\begin{example}
  Assume that $\SS=\oneset{a,b,c}$ with $\dom(a)=\{\alp\}$,
  $\dom(b)=\{\beta\}$ and $\dom(c)=\{\alp,\beta\}$. The trace language ``no two
consecutive $c$'s'' is a locally safety language, and it can be
recognized by an asynchronous automaton where both processes remember
their last action, and do not allow two consecutive $c$'s. 

The trace language ``no $a$ in parallel with a $b$'' is not a locally
safety language (but it is Scott-closed).
\end{example}

For first-order languages we have, as usual, also a characterization
by temporal logics:

\begin{proposition}\label{p:folocsafety}
  The following are equivalent characterizations for $L \subseteq \R$:
  \begin{enumerate}
  \item $L$ is a locally-safety language definable in first-order
    logic. 
\item $L$ is definable by a globally past formula in LocTL.
  \item $K=\Lin(L) \subseteq \SS^\infty$ is a first-order, prefix-closed
    language such that $K \cap \SS^\omega$ is a safety language, and
    $K \cap \SS^*$ satisfies the \emph{forward diamond} property.
  \end{enumerate}
\end{proposition}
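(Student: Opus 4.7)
The plan is to parallel the proof of Proposition~\ref{p:locsafety}, but to keep track of first-order definability at every step. Since Proposition~\ref{p:locsafety} already delivers the equivalence between being a locally-safety language and the structural conditions ``prefix-closed, $K\cap\SS^\omega$ safety, forward diamond on $K\cap\SS^*$'', the only thing left is to show that first-order definability of $L\subseteq\R$ is preserved under the translation $L\leftrightarrow \Lin(L)$, and that both are equivalent to definability by a $G\psi$-formula in LocTL.

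For $(1)\Leftrightarrow(2)$, I would simply invoke the characterization of first-order locally safety languages from \cite{dg09thiagu} that is recalled in the remark following the definition: a prime-closed recognizable language is first-order definable over traces iff it is defined by a formula $G\psi$ with $\psi$ a past LocTL-formula. This takes care of $(2)$ by reduction to $(1)$ and vice versa, with no additional argument needed.

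For $(1)\Leftrightarrow(3)$ I would first apply Proposition~\ref{p:locsafety} to get that $L$ is locally-safety iff $K=\Lin(L)$ is regular, prefix-closed, $K\cap\SS^\omega$ is a safety language, and $K\cap\SS^*$ has the forward diamond property. The remaining content of the claim is that, under these conditions, $L$ is first-order over traces iff $K$ is first-order over words. For a trace-closed language this equivalence is the classical theorem of Ebinger and Muscholl (building on Thomas): on trace-closed languages over $(\SS,I)$, first-order definability of $K\subseteq\SS^\infty$ coincides with first-order definability of the associated trace language, and the translation of formulas is effective. The direction $(1)\Rightarrow(3)$ is the easier one since linearizations of an FO-definable trace language are FO-definable by a direct formula translation; the direction $(3)\Rightarrow(1)$ is the content of the nontrivial half of Ebinger--Muscholl.

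The main obstacle is precisely this transfer of first-order definability between the word view and the trace view. One cannot simply reuse the Kamp-style argument for words, because the dependence relation and prime-prefix structure interact nontrivially with quantifier alternation. Once that transfer is in place, however, the proof is just a bookkeeping exercise on top of Proposition~\ref{p:locsafety} and the LocTL-characterization from \cite{dg09thiagu}.
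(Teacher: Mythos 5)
Your proposal is correct, and its skeleton matches the paper's: $(1)\Leftrightarrow(2)$ is delegated to~\cite{dg09thiagu}, and the word--trace transfer of first-order definability is handled by Ebinger--Muscholl~\cite{ebimus96}, exactly as in the paper. The one place where you diverge is the structural half of $(3)\Rightarrow(1)$, i.e.\ establishing that $L$ is prime-closed. You route this through \refprop{p:locsafety}, which is legitimate (first-order implies regular, so condition~(3) here subsumes condition~(2) there), but it silently imports the machinery of that proposition's proof: the synthesis of a deterministic asynchronous automaton with all states final via~\cite{SEM03} and Zielonka's theorem. The paper instead gives a short direct argument that $L=\ov L$: for $t=\sqcup C$ with $C\sse \P(L)$ coherent, prefix-closedness puts all linearizations of each prime prefix into $K$, the forward diamond property propagates membership in $K$ to linearizations of joins of coherent finite traces, and the safety hypothesis on $K\cap\SS^\omega$ handles infinite $t$. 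So your route is correct but heavier; the paper's is more elementary and self-contained at this step. Everything else in your write-up, including the observation that the nontrivial direction is the transfer of first-order definability from the linearization language back to the trace language, agrees with the paper's proof.
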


\begin{proof}
  The equivalence $(1) \Leftrightarrow (2)$ follows from~\cite{dg09thiagu},
  and the implication $(1) \Rightarrow (3)$ is immediate. For $(3)
  \Rightarrow (1)$ it suffices to show that $L=\ov L$ (since we know
  by~\cite{ebimus96} that $L$ must be first-order). So let $t=\sqcup C$, with $C
  \sse \P(L)$ coherent. For every $u \in \P(L)$ and every
  linearization $x$ of $u$, we have $x \in K$ since $K$ is
  prefix-closed. Moreover, if $\oneset{t,t'}$ is coherent and $K$
  contains all linearizations of $t$ and $t'$, respectively, then by
  the forward diamond property, $K$ contains some (and thus all)
  linearization(s) of $t \sqcup t'$. This shows the claim for finite
  traces $t$. For infinite traces it follows from $K \cap \SS^\omega$
  being a safety language.
\end{proof}

\section{Local monitoring}\label{sec:locmon}

Here and in the following we write $s \leq L$ for a (finite) trace $s \in \R$
and a language $L \subseteq \R$ if there exists some $t \in L$ with $s
\leq t$. 

\begin{definition}\label{df:locmon}
 A set $L \subseteq \R$ is called \emph{\locmon} if for all $s\ \in \P$
 there exists some $t \in \P$ with (1) $s \le t\R$ and (2) either $t\R
 \sse L$ or $t\R \sse L^c$.
\end{definition}

Notice that in the definition of \locmon sets, the first condition
says that $\oneset{s,t}$ is coherent. So a set $L$ is \emph{\locmon} if for
every \emph{prime} trace $s$ there is another \emph{prime} trace $t$ that is
coherent with $s$ and such that after $t$ we know that every 
extension belongs either to $L$ or to its complement $L^c$.

The following lemma extends a well-known observation from words to
traces: 

\begin{lemma}\label{lem:b}
  Every prime-closed trace language is \locmon. In particular, every
  locally-safety (or locally-co-safety) language is \locmon.
\end{lemma}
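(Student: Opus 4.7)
The plan is to use the characterization of prime-closed sets provided by Lemma~\ref{lem:a}: since $L$ is prime-closed, its complement is prime-open, so $L^c = U\R$ for some $U \sse \P$. Given a prime trace $s \in \P$, the task is to find a prime $t$ such that $\oneset{s,t}$ is coherent (which is exactly what $s \le t\R$ amounts to) and such that $t\R$ is entirely contained in $L$ or entirely in $L^c$.

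The proof proceeds by a two-case analysis on $s$. If $s\R \sse L$, then $t := s$ itself works, since $s$ is prime and is trivially coherent with itself. Otherwise there exists some $r \in s\R \cap L^c$; because $L^c = U\R$, we can pick a prime $u \in U$ with $u \le r$. Now $s$ and $u$ are both prefixes of the same trace $r$, so the pair $\oneset{s,u}$ is coherent, i.e., $s \le u\R$. Setting $t := u$ then gives a prime witness coherent with $s$ with $t\R = u\R \sse U\R = L^c$, as required.

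For the second sentence of the lemma, locally-safety languages are prime-closed by definition and are therefore handled directly. For locally-co-safety languages, observe that Definition~\ref{df:locmon} is symmetric under complementation, as the disjunction ``$t\R \sse L$ or $t\R \sse L^c$'' is unchanged when $L$ and $L^c$ are swapped. Hence a language is \locmon iff its complement is, and the co-safety case follows from the safety case.

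There is no real obstacle in this argument: once one translates $s \le t\R$ into the statement that $\oneset{s,t}$ is coherent, the proof is a direct manipulation of the definitions together with the decomposition $L^c = U\R$ from Lemma~\ref{lem:a}. The only small point requiring attention is that the witness $t$ must be \emph{prime}, but this is automatic in both cases, since $s$ itself is prime and the elements of $U$ are prime by construction.
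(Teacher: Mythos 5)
Your proof is correct and follows essentially the same route as the paper's: split on whether $s\R \sse L$, and otherwise extract a prime $u$ from the prime-open decomposition of $L^c$ that lies below a common extension of $s$, so that $\oneset{s,u}$ is coherent and $u\R \sse L^c$. The explicit remark on the symmetry of Definition~\ref{df:locmon} under complementation, handling the co-safety case, is a small addition the paper leaves implicit.
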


\begin{proof}
Let $L = \ov L$ and $s \in \P$. If $s\R$ is not a subset of $L$, then 
there exists some $t=sx \in L^c$. Since $L$ is prime-closed this means
that there is some $u\in \P\sm \P(L)$ with $u\leq t$.  
But then $\oneset{u,s}$ is coherent, thus $s\leq u \R$ and $u \R \sse L^c.$
\end{proof}

The next proposition characterizes \locmon sets in terms of the
closure operator defined in the previous section: 

\begin{proposition}\label{prop:c}
$L \subseteq \R$ is \locmon \IFF $\ov L \cap \ov{L^c}$ does not contain any
non-empty prime-open subset.  
\end{proposition}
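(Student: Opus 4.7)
The plan is to prove both directions by leveraging the characterization from \reflem{lem:a}: $r \in \ov L$ iff every prime prefix of $r$ lies in $\P(L)$. Equivalently, $\ov L^c = (\P\setminus\P(L))\R$, which makes explicit that complements of prime-closures are prime-open with a canonical set of generators.

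For the ``only if'' direction, I would argue by contradiction. Suppose $L$ is \locmon and some non-empty prime-open $V = \bigcup_{v\in W} v\R \subseteq \ov L \cap \ov{L^c}$ exists. Pick any $s \in W$, so $s \in \P$ and $s\R \subseteq \ov L \cap \ov{L^c}$. By \refdef{df:locmon}, there is a prime $t$ coherent with $s$ such that, without loss of generality, $t\R \subseteq L$. Set $r = s\sqcup t \in \R$. Since $s \leq r$, we have $r \in s\R \subseteq \ov{L^c}$, and since $t \leq r$ is prime, $t$ is a prime prefix of $r$. By the characterization above, $t \in \P(L^c)$, so there exists $t' \in L^c$ with $t \leq t'$. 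But then $t' \in t\R \subseteq L$, a contradiction. The case $t\R \subseteq L^c$ is symmetric, using $s\R \subseteq \ov L$.

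For the ``if'' direction, fix any $s \in \P$. The set $s\R$ is a non-empty prime-open subset of $\R$, so by hypothesis it is not contained in $\ov L \cap \ov{L^c}$. Hence there exists $r \in s\R$ lying outside (say) $\ov L$. Since $\ov L$ is prime-closed, $\R\setminus\ov L$ is prime-open of the form $U\R$ with $U \subseteq \P$, so some $u \in U$ satisfies $u \leq r$. Then $u \in \P$, and $u\R \subseteq \R\setminus\ov L \subseteq L^c$. Moreover, $s \leq r$ and $u \leq r$ imply $\{s,u\}$ is coherent, i.e., $s \leq u\R$. Thus $t := u$ witnesses local monitorability at $s$. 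The case $r \notin \ov{L^c}$ is symmetric and yields $t$ with $t\R \subseteq L$.

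I do not expect a serious obstacle: once one has the description of $\ov L$ via prime prefixes from \reflem{lem:a}, both directions reduce to short coherence arguments using the join $s \sqcup t$ of coherent traces. The only subtlety is to remember that ``$s \leq t\R$'' means $\{s,t\}$ is coherent (which is what allows forming $s \sqcup t$), and that ``$t$ is prime'' is precisely what makes $t$ a prime prefix of any $r \geq t$, so that the membership characterization of the prime closure applies directly.
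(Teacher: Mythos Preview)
Your proof is correct and follows essentially the same approach as the paper's: both directions hinge on the identity $(\ov L)^c = (\P\setminus\P(L))\R$ from \reflem{lem:a}, and both use the join $s\sqcup t$ (the paper phrases this as $s\R\cap t\R\neq\es$) to derive the contradiction in the forward direction and the coherence of $\{s,u\}$ in the backward direction. Your write-up is slightly more explicit in unpacking the prime-prefix characterization of $\ov{L^c}$, but the argument is the same.
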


\begin{proof}
First, assume by contradiction that $L$ is \locmon, but  $s \R \sse
\ov L \cap \ov{L^c}$ 
for some $s\in \P$. By symmetry in $L$ and $L^c$ we may assume that we
find $t \in \P$ and $s \leq t\R \sse L$. Hence, $t \notin \P(L^c)$ and
thus $t\R \cap \ov{L^c} = \es$.
But $s\R \cap t\R \neq \es$. Contradiction. 

For the other direction let $s \in \P$. We may assume (again by
symmetry in $L$ and $L^c$) that $s\R \cap {\ov L}^c \neq \es$. Hence,
there is $x\notin \ov L$ with $s \leq x$. This implies that there is
$t\in \P\sm \P(L)$ with 
$s \leq t\R$. Thus, $t\R \sse L^c$ and $L$ is \locmon. 
\end{proof}

We state now the main result of this section, which shows that
whenever a recognizable property over traces is \locmon, we can build
a monitor that is of the same type as the system on which it runs,
i.e., an asynchronous automaton.

\begin{theorem}\label{th:locmon}
Let $L \sse \R$ be recognizable. Then we can decide whether $L$ is \locmon.
Moreover, if $L$ is \locmon, then we find a deterministic asynchronous 
finite state monitor for $L$. 
\end{theorem}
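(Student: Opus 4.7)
The plan is to reduce both parts of the theorem to reachability-type questions on a product of deterministic asynchronous automata obtained from Lemma~\ref{l:regclosure}. Applying that lemma to both $L$ and $L^c$ yields deterministic asynchronous automata $\Aa_1$ for $\ov L$ and $\Aa_2$ for $\ov{L^c}$, each with all reachable states final. Their product $\Aa = \Aa_1 \times \Aa_2$ is then a deterministic asynchronous automaton whose readable traces are exactly $\ov L \cap \ov{L^c}$.

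The central combinatorial observation is that for a prime trace $t$, $t\R \sse L^c$ \IFF $t \notin \ov L$ \IFF $\Aa_1$ cannot read $t$, and symmetrically $t\R \sse L$ \IFF $t \notin \ov{L^c}$ \IFF $\Aa_2$ cannot read $t$: one direction uses $L \sse \ov L$, the other exploits that a prime $t$ is a prime prefix of itself, so $t \in \ov L$ forces $t \in \P(L)$ and hence some $tr \in L$. Combined with Proposition~\ref{prop:c}, this yields: $L$ is \locmon \IFF there is no prime $t$ with $t\R \sse \ov L \cap \ov{L^c}$, equivalently no reachable state of $\Aa$ lies in the largest set $F$ of states of $\Aa$ from which every action is defined and leads back into $F$. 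Computing $F$ by a standard greatest-fixpoint iteration (iteratively drop any state with an undefined outgoing action) and intersecting with the reachable states of $\Aa$ yields a polynomial-time decision procedure.

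For the monitor, assume $L$ is \locmon. Augment each local state of $\Aa$ with a flag in $\{?, L, L^c\}$, initially $?$, and complete the transitions into an everywhere-enabled asynchronous automaton as follows: when action $a$ is undefined in $\Aa_1$ (resp.\ $\Aa_2$) at the current local states of processes in $\dom(a)$, those processes enter an absorbing local state carrying flag $L^c$ (resp.\ $L$), and the other component is updated if its transition is defined; when both components are defined, both transitions are taken and flags propagate unchanged; actions on a process already carrying a non-$?$ flag simply keep it. The identity $\ov L \cup \ov{L^c} = \R$ prevents both components from blocking at the same action, so flags remain single-valued. By the characterization above, setting the $L^c$ flag while reading an action at trace $ua$ means exactly $ua \notin \ov L$, hence $(ua)\R \sse L^c$; symmetrically for $L$. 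By \locmon, every prime prefix of the system's execution is followed by some coherent prime at which blocking occurs, so the monitor issues a verdict after a bounded number of further actions.

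The main obstacle is the translation in the second paragraph, where one must convert the topological condition ``$t\R \sse \ov L \cap \ov{L^c}$'' into the algorithmic statement that some reachable state of $\Aa$ lies in $F$. This requires carefully unfolding $\ov L$ and $\ov{L^c}$ via prime prefixes and then using that the automata $\Aa_i$ have only blocked transitions as their failure mode (since all reachable states are final). The asynchronous semantics is what makes this delicate: coherent primes of $t$ need not appear as suffixes in any given linearization of $t$, yet determinism and locality of $\Aa$ together imply that non-blocking from the global state reached after $t$ captures exactly the desired property.
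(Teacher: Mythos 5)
Your overall strategy is the paper's: apply Lemma~\ref{l:regclosure} to $L$ and $L^c$, take the product of the two deterministic asynchronous automata for $\ov L$ and $\ov{L^c}$, and read off both the decision procedure and the monitor from where the components block. The key observation that for a prime $t$ one has $t\R \sse L^c$ \IFF $t \notin \ov L$ \IFF the $\ov L$-component blocks on $t$ is correct, and the monitor in your last paragraph is exactly the paper's automaton $\Cc$ with $\bot/\top$ renamed into verdict flags.

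There is, however, a genuine gap in the step you label ``equivalently'': you replace ``there is a \emph{prime} $t$ with $t\R \sse \ov L \cap \ov{L^c}$'' (which is what Proposition~\ref{prop:c} actually gives) by ``some \emph{reachable} state of $\Aa$ lies in $F$''. A state of $F$ reached on a finite trace $u$ only witnesses $u\R \sse \ov L \cap \ov{L^c}$; to contradict local monitorability you must extend $u$ to a prime, and this is impossible when $(\SS,\dom)$ is disconnected and $u$ meets several components. Concretely, let $\Proc=\os{1,2}$, $\SS=\os{a,a',b,b'}$ with $\dom(a)=\dom(a')=\os{1}$, $\dom(b)=\dom(b')=\os{2}$, and let $L$ consist of the traces whose projection to process $1$ begins with $a'$ or whose projection to process $2$ begins with $b'$. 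Then $\ov L=\R$ and $\ov{L^c}=L^c \supseteq ab\R$, so the state reached on $ab$ lies in $F$ and your test answers ``not \locmon''; yet $L$ is \locmon, because for every prime $s$ the independent prime $a'$ or $b'$ gives a verdict ($a'\R\sse L$ and $b'\R\sse L$). The repair is to quantify only over states reachable by \emph{prime} traces (computable, e.g., by tracking the set of maximal processes as in the proof of Lemma~\ref{l:regclosure}), or to treat the disconnected case separately as in Section~\ref{ssec:discon}. With that correction your greatest-fixpoint criterion is sound for connected alphabets, and it is in fact closer to the mark than the paper's literal test (existence of one infinite run of $\Cc$ avoiding $\top$ and $\bot$), which over-approximates in the same direction: already for the one-process co-safety language ``eventually $b$'' over $\os{a,b}$ the run on $a^\oo$ avoids $\top$ and $\bot$ although the language is \locmon. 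So state and prove the equivalence for prime-reachable states only; as written, the central equivalence of your second paragraph is false.
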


\begin{proof}
  By Lemma~\ref{l:regclosure} there exist deterministic asynchronous
  automata $\Aa$, $\Aa'$ accepting $\ov L$ and $\ov{L^c}$, resp., such
  that all their reachable states are final. 

  Let $(\de_a)_{a \in\SS},(\de'_a)_{a \in \SS}$ be the transition
  functions of $\Aa,\Aa'$, resp. We modify the product automaton $\Aa
  \times \Aa'$ to a (deterministic) asynchronous automaton $\Cc$ with
  transition functions $(\D_a)_{a \in \SS}$: first we add two local
  states $\bot_\alp$, $\top_\alp$ on each process $\alp\in
  \Proc$. Consider $a \in\Sigma$ and some trace $t$ on which $\Aa$
  reaches state $s$ and $\Aa'$ reaches state $s'$. Note that $ta$ belongs to
  one of $\ov L$ or $\ov{L^c}$ (or both). If $\Aa$ has no
  $a$-transition on $s_{\dom(a)}$ then we add
  $\D_a((s_\alp,s'_\alp)_{\alp\in\dom(a)})= (\bot_\alp)_{\alp \in
  \dom(a)}$. If $\Aa'$ has no $a$-transition on $s'_{\dom(a)}$ then we
add the transition $\D_a((s_\alp,s'_\alp)_{\alp\in\dom(a)})=
(\top_\alp)_{\alp \in \dom(a)}$. The first case corresponds to $ta\R
\cap \ov L=\es$, the second one to $ta\R \cap \ov{L^c}=\es$. Else,
$\D_a((s_\alp,s'_\alp)_{\dom(a)})$ is defined as the componentwise
product of $\de_a(s_{\dom(a)})$ and $\de'_a(s'_{\dom(a)})$.
Finally, for each $a \in \SS$ and each tuple
$\hat{s}_{\dom(a)}$ of states of $\Aa \times \Aa'$: if some component
of $\hat{s}_{\dom(a)}$ is $\bot$, then all components of
$\D_a(\hat{s}_{\dom(a)})$ become $\bot$, and symmetrically for $\top$. The
language $L$ is not \locmon \IFF the automaton $\Cc$ has some infinite
run where no process gets into state $\top$ or $\bot$.
\end{proof}

\begin{proposition}\label{prop:lowb}
The following problem is $\PSPACE$-hard:
\begin{itemize}
\item Input: A B\"uchi automaton $\Bb = \tuple{Q, \Sigma, \del, q_0, F}$.
 \item Question: Is the accepted language $\Ll(\Bb) \sse \Sig^\oo$ \moni?
\end{itemize}
\end{proposition}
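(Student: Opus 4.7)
The plan is to reduce from the PSPACE-complete universality problem for NFAs. Given an NFA $\Aa = \langle Q, \Sigma, \de, q_0, F\rangle$ I construct in polynomial time a B\"uchi automaton $\Bb$ over the extended alphabet $\Sigma' = \Sigma \cup \oneset{c}$, with $c \notin \Sigma$, accepting
\[
\Ll(\Bb) \;=\; \set{w \in (\Sigma')^\omega}{w \text{ has infinitely many } c\text{'s and every maximal } \Sigma\text{-factor of } w \text{ belongs to } \Ll(\Aa)}\,.
\]
Operationally, $\Bb$ runs $\Aa$ between consecutive $c$'s: on $a \in \Sigma$ it applies $\Aa$'s transitions; on reading $c$ it checks that the current $\Aa$-state is in $F$ (otherwise it moves to a non-accepting sink), resets to $q_0$, and raises a B\"uchi flag; the B\"uchi condition requires this flag to fire infinitely often. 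This costs only $O(|Q|)$ states, so the reduction is polynomial.

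I then aim to show that $\Ll(\Bb)$ is \moni iff $\Ll(\Aa) \neq \Sigma^*$. For the direction $\Ll(\Aa) = \Sigma^*$, every $\Sigma$-factor lies trivially in $\Ll(\Aa)$, and $\Ll(\Bb)$ collapses to the textbook non-\moni $\omega$-language ``infinitely many $c$'s'': for any prefix $u \in (\Sigma')^*$ and any finite extension $v$, the continuation $uvc^\omega \in \Ll(\Bb)$ while $uva^\omega \notin \Ll(\Bb)$ for any $a \in \Sigma$, so the cylinder $uv(\Sigma')^\omega$ always meets both $\Ll(\Bb)$ and its complement.

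For the converse direction, fix a witness $u_0 \in \Sigma^* \sm \Ll(\Aa)$. For an arbitrary prefix $u \in (\Sigma')^*$, I use the extension $v = c\,u_0\,c$: since $u_0 \in \Sigma^*$ is sandwiched between two $c$'s, it appears as a complete maximal $\Sigma$-factor of $u c u_0 c y$ for every $y \in (\Sigma')^\omega$; because $u_0 \notin \Ll(\Aa)$, this forces $u c u_0 c y \notin \Ll(\Bb)$ for every $y$, i.e., $u c u_0 c \cdot (\Sigma')^\omega \sse \Ll(\Bb)^c$. Every prefix thus has a decisive negative extension and $\Ll(\Bb)$ is \moni.

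The only subtlety is to ensure $u_0$ is really isolated as a maximal $\Sigma$-factor of the extended word, which is automatic since $c \notin \Sigma$: regardless of how $u$ ends or what $y$ looks like, the two freshly inserted $c$'s delimit the occurrence of $u_0$. Together with the PSPACE-completeness of NFA universality, this polynomial reduction yields the claimed PSPACE lower bound.
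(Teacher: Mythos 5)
Your proof is correct, and it follows the same overall strategy as the paper -- a polynomial reduction from NFA universality whose gadget exploits the canonical non-\moni liveness language ``infinitely many occurrences of a fresh letter'' -- but with the polarity reversed. The paper arranges $\Ll(\Aa)=\Sig^*$ \IFF $\Ll(\Bb)$ is \moni: a single check after a fresh letter $b$ sends accepting prefixes to an all-accepting sink (decisively positive) and sends a witness of non-universality into a state from which the residual language is the non-\moni one. You instead arrange $\Ll(\Aa)=\Sig^*$ \IFF $\Ll(\Bb)$ is \emph{not} \moni, by iterating $\Aa$ on the blocks between consecutive $c$'s; this is arguably the more symmetric gadget, but it is a reduction from \emph{non}-universality, so you should state explicitly that hardness still follows because \PSPACE is closed under complement (equivalently, NFA non-universality is itself \PSPACE-hard) -- routine, but currently left implicit. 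Two small points to tighten: since $\Aa$ is nondeterministic there is no single ``current $\Aa$-state'' when a $c$ is read; the correct formulation is that $\Bb$ nondeterministically simulates a run of $\Aa$ on each block, a block being good \IFF \emph{some} run on it ends in $F$, which under the existential acceptance of B\"uchi automata yields exactly the language you describe. And the non-monitorability argument in the universal case uses some $a\in\Sig$, so it needs the harmless assumption $\Sig\neq\es$.
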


\begin{proof}The universality problem for non-deterministic finite
  automata (NFA) is one of the well-known \PSPACE complete
  problems. We reduce this problem to the problem of monitorability.
 
Start with an NFA
$\Aa = \tuple{Q', \Gam, \del', q_0, F'}$. We will construct a B\"uchi automaton $\Bb$ 
such that we have $\Ll(\Aa)= \Gam^*$ \IFF $\Ll(\Bb) \sse \Sig^\oo$ is \moni. 

For this we use a new letter $b$ and we let $\Sig = \Gam \cup \os{b}$. 
We use three new states $d,e,f$ and we let $Q = Q'\cup \os{d,e,f}$.
The repeated (or final) states of $B$ are defined as 
$F= \os{e,f}$. The initial state is the same as before: $q_0$. 
It remains to define $\del$. We keep all arcs from $\del'$ and we add the 
following new arcs. 
\begin{itemize}
\item $ q \act{b} d \act{a} e \act{a} e$ for all $q \in Q' \sm F'$ and
  all $a \in \Gam$.  
 \item $ e \act{b} d \act{b} d$
 \item $ q \act{b} f \act{c} f $ for all $q \in F'$ and 
 all $c \in \Sig$. 
 \end{itemize}
 In order to understand the construction, consider what happens if we
 reach state $d$ or state $f$. Starting in $f$ we accept everything,
 because we loop in a final state of $\Bb$. On the other hand starting
 in $d$ we accept all words except those which end in
 $b^\oo$. Starting in $d$ we are nowhere \moni.
 
 Now, let $w\in \Sig^*$. This can be written as $uv$ where $u\in
 \Gam^*$ is the maximal prefix without any occurrence of $b$.  
 
 Assume we have 
 $\Ll(\Aa)= \Gam^*$, then there is path from $q_0$ to $f$ labelled by
 $wb$ since  
 reading $u$ leads us to some state in $F'$. This implies that 
 $wb\Sig^\oo \sse \Ll(\Bb)$ for all $w\in \Gam^*$; and $\Ll(\Bb)$ is \moni. 
 
 On the other hand, if $\Ll(\Aa) \neq \Gam^*$, then there is some word
 $u \in \Gam^*$ such that $u$ leads to states in $Q'\sm F'$, only. Thus, reading
 $ub$ we are necessarily in state $d$. The language $\Ll(\Bb)$ is not
 \moni, due to the word $ub \in \Sig^*$.
\end{proof}

We have a matching upper bound for Büchi automata in the theorem
below. Note that the input is a B\"uchi automaton accepting a
trace-closed language, therefore we may see the accepted language also
as a subset of $\R$.

\begin{theorem}\label{thm:compl}
  The following problem is $\PSPACE$-complete:
\begin{itemize}
\item Input: A B\"uchi automaton $\Bb = \tuple{Q, \Sigma, \del, q_0, F}$
  and $(\SS,\dom)$ such that $\Ll(\Bb)$ is trace-closed.
 \item Question: Is the accepted language $\Ll(\Bb) \sse \R$ \locmon?
\end{itemize}
\end{theorem}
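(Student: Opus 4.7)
The plan is to get PSPACE-hardness by adapting the construction of Proposition~\ref{prop:lowb}, and the PSPACE upper bound by turning Proposition~\ref{prop:c} into an on-the-fly search on the product automaton $\Cc$ from the proof of Theorem~\ref{th:locmon}.

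For PSPACE-hardness, I specialize the distributed alphabet to a single process, so that every word language is trace-closed and $\R=\SS^*\cup\SS^\omega$, and I reuse the construction of Proposition~\ref{prop:lowb} (interpreting $\Bb$ with finite-plus-infinite acceptance, and assuming WLOG that $q_0\in F'$, so that NFA-universality already excludes the empty word). The equivalence ``$\Ll(\Aa)=\Gam^*$ iff $\Ll(\Bb)\sse\R$ is \locmon'' can then be checked directly: when universality holds, for every prime $s$ some short extension $t$ drives $\Bb$ into the absorbing accepting state $f$, whence $t\R\sse L$; when universality fails, reading $ub$ for $u\notin\Ll(\Aa)$ leads $\Bb$ to the state $d$, from which any further reading keeps oscillating between accepting and non-accepting, so no prime extension of $ub$ can commit to either $L$ or $L^c$.

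For the upper bound, Proposition~\ref{prop:c} tells us that $L=\Ll(\Bb)$ fails to be \locmon iff some prime trace $s$ satisfies $s\R\sse\ov L\cap\ov{L^c}$. In terms of the product automaton $\Cc$ from Theorem~\ref{th:locmon}, this amounts to finding a state of $\Cc$ reachable by a linearization of a prime trace, such that no finite sequence of actions from that state ever reaches a $\top$- or $\bot$-component. By Lemma~\ref{l:regclosure} combined with Safra-style determinization of $\Bb$, the deterministic asynchronous automata $\Aa,\Aa'$ for $\ov L,\ov{L^c}$, and hence $\Cc$, have state spaces of exponential size in $|\Bb|+|\Proc|$, but every global state admits a polynomial-space description (a polynomial-size Safra tree per process, together with Zielonka-style time-stamps). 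The PSPACE algorithm then guesses an input linearization one letter at a time, storing only the current state of $\Cc$ in polynomial space; at a non-deterministically chosen halting step it (i) verifies that the current trace is prime, which is a local check on the time-stamps stored in the current local states, and (ii) performs a non-reachability check ensuring that no $\top$- or $\bot$-component is reachable from the current state. Both subroutines run in PSPACE on the polynomial-space-described state space of $\Cc$, and the whole procedure composes in PSPACE via Savitch's theorem.

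The main obstacle will be bookkeeping: keeping Safra's determinization, Zielonka's time-stamping, and the $\top/\bot$-extended product simultaneously polynomial-space per state, and reading off primality of the current trace directly from the current local states. Once these routine technicalities are settled, the PSPACE upper bound is a standard emptiness test on a succinctly presented, exponential-state transition system.
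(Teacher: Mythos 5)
Your hardness argument is essentially the paper's: the lower bound in \refthm{thm:compl} is exactly \refprop{prop:lowb} specialized to a one-process alphabet (where every language is trace-closed and local monitorability degenerates to word monitorability), and your adjustment to finite-plus-infinite acceptance is a reasonable reading of ``$\Ll(\Bb)\sse\R$''. The upper bound, however, is where you diverge from the paper, and where your proposal has a genuine gap. The paper never constructs an asynchronous automaton at all: it works directly with subsets $P\sse Q$ of the input B\"uchi automaton, defining in \PSPACE the predicates ``$P$ is good'' (i.e.\ $\Ll(\Bb,P)=\Sig^\oo$ or $=\es$) and $\Reach(P,P',a)$ (reachability via linearizations of prime traces ending in $a$), and then reduces local monitorability, after a case analysis on the connected components of $(\Sig,\dom)$, to a fixed-depth alternation of these predicates. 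Everything lives in the subset lattice of $Q$, so membership in \PSPACE is immediate.

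Your route instead goes through \refprop{prop:c} and the product automaton $\Cc$ of \refthm{th:locmon}, whose components are the deterministic asynchronous automata for $\ov L$ and $\ov{L^c}$ produced by \reflem{l:regclosure}. The correctness of the reformulation (``some state of $\Cc$ reached on a prime linearization from which no word reaches a $\top$- or $\bot$-component'') is fine, as is the primality bookkeeping and the use of Savitch for the non-reachability test. What is not established --- and is the entire technical content of your upper bound --- is the claim that every state of $\Cc$ ``admits a polynomial-space description'' with a \emph{polynomial-space computable transition function}. The automata of \reflem{l:regclosure} are obtained by composing Safra determinization of $\Bb$, the prime-prefix DFA, and the Zielonka construction of \refthm{th:zielonka}; the cited complexity results bound only the \emph{total number} of states of the resulting asynchronous automaton, not the space needed to represent a single local state nor the complexity of computing one transition on such a representation. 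The Zielonka/time-stamping states are not simply ``a Safra tree per process'': they carry gossip information whose succinct, on-the-fly maintainability in polynomial space is precisely what you would have to prove, and no such statement is available in the paper or its references. Deferring this as a ``routine technicality'' leaves the \PSPACE membership claim unproven; as it stands your argument only yields an \EXPSPACE-style bound obtained by explicitly building $\Cc$. To close the gap you would either have to carry out that succinctness analysis in detail, or fall back on the paper's subset-based algorithm.
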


\begin{proof}
 For a subset $P \sse Q$ let us write $\Ll(\Bb,P)$ for the accepted language of 
 $\Bb$ when $P$ is used as a set of initial states. We say that $P$ is \emph{good} 
 if either $\Ll(\Bb,P) = \Sig^\oo$ or $\Ll(\Bb,P) = \es$. The predicate whether 
 $P$ is good can be computed in \PSPACE. For a letter $a \in \Sig$ and $P,P' \sse Q$ 
 we define another predicate 
 $\Reach(P,P',a)$, which is defined to be \emph{true}, if:
 \[P' = \set{q\in Q}{\exists p\in P \; \exists ta \in \P \;
   \text{ and } p \act{ta} q}\,.\]
\noindent
Note that $\Reach(P,P',a)$ is computable
in \PSPACE, too.
If there is no $a \in \Sig$ such that $\Reach(\os{q_0},P',a)$ becomes
true for some good $P'\sse Q$, then $L= \Ll(\Bb)$ is not
\locmon. Thus, we may assume that such $P$ and $a$ exist. If there are
two letters $a$ and $b$ in different connected components of $(\Sig,
\dom)$ with this property, then $L$ is \locmon.  Hence we assume in
the following that there is only one component where such a letter $a$
exist. Indeed, letters occurring in some prime traces belong to a
single connected component of $(\Sig, \dom)$; and due to
$\Reach(\os{q_0},P',a)$ it is enough to consider monitorability of
prime traces which belong to the same component as the letter
$a$. Since every such prime trace can be made longer such that it ends with
this letter $a$, we fix $a$ in the following.

Now, the language $L \sse \R$ is \locmon \IFF for all $P\sse Q$ such
that $\Reach(\os{q_0},P,a)$ holds, 
there is some good subset $P'$ such that we have $\Reach(P,P',a)$.

To see this, let $L \sse \R$ be \locmon. Consider a subset $P$ such that 
$\Reach(\os{q_0},P,a)$ holds.  
This corresponds to some
word $s$ such that the corresponding trace $s= s'a$ is a prime. 
Since $L$ is \locmon, 
there exists some prime $t$  such that $s \leq t\R$ and 
either $t \R \sse L$ or $t \R \sse L^c$. However, by the assumption above, 
we may assume that $s$ and $t$ belong to the same component. We can make 
$t$ longer and actually assume $s \leq t$ and such that $t=t'a$. 
Choose some representing word  $w$ for $t$. If $P'$ is the subset of states we can reach after reading 
$w$ starting in $q_0$ we have  $\Reach(P,P',a)$.
The set $P'$ is good, because $L$ is trace-closed. 
Indeed, if $t \R \sse L$, then $w\Sig^\oo\sse L$, hence
$\Ll(\Bb,P') = \Sig^\oo$. If  
$t \R \sse L^c$, then  $\Ll(\Bb,P') = \es$.  

For the converse it is clear that the condition is strong enough to
ensure local monitorability of $L$.  
\end{proof}

The condition to monitor a single language might be an unnecessary restriction.  We can imagine a certain family of properties or languages 
$L_1, \ldots, L_n$ and we content ourselves with a monitor which selects 
one of these possibilities, even if certain $L_i$ and $L_j$ do intersect
non-trivially for $i \neq j$. This leads to the following definition. 

\begin{definition}\label{def:fam}
Let $n\in \N$ and $L_1$, \ldots, $L_n$ be subsets of $\R$. 
We say that the family  $\os{L_1, \ldots, L_n}$ is \locmon,
if
$$\forall s \in \P \;\exists t \in \P\; \exists 1\leq i \leq n: \;  s \leq t\R \sse L_i.$$ 
\end{definition}

\begin{remark}\label{rem:e}
A language $L$ is \locmon \IFF the family $\os{L, L^c}$ is \locmon. 
\end{remark}

A distributed alphabet $(\SS,\dom)$ can be split into 
several  \emph{connected components}. This is a partition 
$\SS=\SS_1 \cup \cdots \cup \SS_k$ such that all $\SS_i$ are non-empty and  $\SS_i \times \SS_j \sse I$
for all $1 \leq i < j \leq k$. We say that $(\SS,\dom)$ is \emph{connected}, 
if $k = 1$ and \emph{disconnected} otherwise. For $k \geq 2$ we can write
$\R = \R' \times \R''$ such that $\R'$ and $\R''$ are both infinite. 

\subsection{Disconnected case}\label{ssec:discon}
We assume in this section that $(\SS,\dom)$ is disconnected 
and we write $\R = \R' \times \R''$.
Let $L \sse \R$. If $L$ is \locmon then, necessarily $s\R \sse L$ or $s\R \sse L^c$ for some prime $s \in \P = \P(\R') \cup \P(\R'')$. By symmetry we may assume 
$s \in \P(\R')$ and $s\R \sse L$. As a consequence, 
there is no $t \in \P(\R'')$ such  $t\R \sse L^c$. On the other hand, 
if there is some prime $t \in \P(\R'')$ such  $t\R \sse L$, then $L$ is \locmon for a trivial reason: For every prime trace $u \in \P$ we either have 
$u \in \R'$ or $u \in \R''$; and by choosing either the prime 
$s$ or $t$ in the other component as $u$ we satisfy the required condition for 
$L$ to be \locmon. 

Hence we are only interested in the case that there is no prime $t \in
\R''$ such that $t\R \sse L$. In this case we can reduce the problem
whether $L$ is \locmon to the component of $\R'$ as follows: First,
let us define languages of prime traces $L_1 = \set{u\in
  \P(\R')}{u \R \sse L}$ and $L_2 = \set{u\in \P(\R')}{u \R \sse
  L^c}$.  Note that if $L$ is recognizable, then $ L_1, L_2$, as well
as $ L_1\R', L_2\R'$,
 are recognizable too. 
Moreover, we can construct the corresponding automata. 

\begin{theorem}\label{thm:discon}
  Let $L \sse \R = \R' \times \R''$ and assume that there is some $s
  \in \P(\R')$ such that $s\R \sse L$ but there is no $t \in \P(\R'')$
  with $t\R \sse L$.  Then $ L$ is \locmon \IFF the family 
$\os{L_1\R', L_2\R'}$ is \locmon w.r.t.~$\R'$. 
\end{theorem}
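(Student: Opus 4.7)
The plan is to exploit the decomposition $\P = \P(\R') \sqcup \P(\R'')$ (every prime trace has all its events in a single connected component of the alphabet) to reduce monitorability of $L$ to a condition on $\R'$. Under the hypothesis, every useful witness in the definition of \locmon is forced into $\P(\R')$, which is what opens the door to the reformulation in terms of the family $\{L_1\R', L_2\R'\}$.

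Let $s_0 \in \P(\R')$ denote the prime supplied by the hypothesis, i.e.~$s_0\R \sse L$. First I will observe that under the hypothesis no $t \in \P(\R'')$ can witness monitorability of $L$: neither $t\R \sse L$ (ruled out by assumption) nor $t\R \sse L^c$ is possible, the latter because $s_0 t \in s_0\R \cap t\R$ would force $s_0 t \in L \cap L^c$. Second, for any starting prime $s \in \P(\R'')$ the monitorability condition is automatic in both directions of the theorem: since $s_0 \in L_1$, the prime $s_0 \in \P(\R')$ commutes with $s$, giving $s \le s\, s_0 \in s_0\R \sse L$. Hence the whole statement reduces to showing the equivalence of
\[
(\ast) \quad \forall s \in \P(\R') \; \exists t \in \P(\R'): \; s \le t\R' \text{ and } \bigl(t\R \sse L \text{ or } t\R \sse L^c\bigr)
\]
with the local monitorability of the family $\{L_1\R', L_2\R'\}$ in $\R'$. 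Coherence of two traces in $\R'$ is the same whether measured in $\R$ or in $\R'$ (any common upper bound in $\R$ projects to one in $\R'$), which justifies the switch from $t\R$ to $t\R'$ when comparing the two conditions.

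The forward direction of this equivalence is direct: if $t \in \P(\R')$ satisfies $t\R \sse L$ then by definition $t \in L_1$, so $t\R' \sse L_1\R'$; and symmetrically $t\R \sse L^c$ gives $t\R' \sse L_2\R'$. For the backward direction, given $t \in \P(\R')$ with $t\R' \sse L_i\R'$, I will apply the inclusion to the element $u = t$ itself: this extracts some prime $p \in L_i$ with $p \le t$; then every $r \in t\R$ extends $p$, so $r \in p\R \sse L$ (if $i = 1$) or $r \in p\R \sse L^c$ (if $i = 2$), giving $t\R \sse L$ or $t\R \sse L^c$ as required. I expect this last step to be the main subtlety: the inclusion $t\R' \sse L_i\R'$ does not say that $t$ itself is in $L_i$, only that some prime prefix $p$ of $t$ lies in $L_i$; this is nonetheless enough, because by definition $L_i$ controls the entire $\R$-future (not just the $\R'$-future) of its elements, which propagates the conclusion from $\R'$ back to the whole product $\R$.
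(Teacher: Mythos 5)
Your proof is correct and follows essentially the same route as the paper's: the hypothesis forces every monitoring witness into $\P(\R')$ (via the coherent trace $s_0t \in s_0\R\cap t\R$), primes of $\P(\R'')$ are handled by $s_0$ itself, and the backward direction extracts a prefix $p\in L_i$ of $t$ and propagates $t\R\subseteq p\R\subseteq L$ (resp.\ $L^c$). Your explicit justification that coherence of two traces of $\R'$ is the same whether measured in $\R$ or in $\R'$ is a detail the paper leaves implicit, but otherwise the arguments coincide.
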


\begin{proof}
  First, let $L$ be \locmon and $s \in \P$ be a prime. Choose some
  prime $t\in \P$ with $s \leq t\R$ such that either $t\R \sse L$ or
  $t\R \sse L^c$.  We cannot have $t \in R''$, hence $t \in
  \P(R')$. Thus, either $t \in L_1$ or $t \in L_2$. It follows that
  $t\R' \sse L_1\R'$ or $t\R' \sse L_2\R'$, and hence $\os{L_1\R',
    L_2\R'}$ is \locmon w.r.t.~$\R'$.

  For the other direction let $\os{L_1\R', L_2\R'}$ be \locmon
  w.r.t.~$\R'$. Then for every prime $u \in \P(\R')$ there is some $v
  \in \P(\R')$ such that $u \leq v\R'$ such that either $v\R' \sse
  L_1\R'$ or $v\R' \sse L_2\R'$. In particular, either $v\in L_1$ or
  $v\in L_2$, since $r \leq v$ with $r \in L_i$ implies $v \in
    L_i$. By definition, either $v \R \sse L$ or $v \R \sse
  L^c$. Thus, $L$ is \locmon on all primes of $\R'$. Now, let $u \in
  \P(\R'')$. By assumption there is some $s \in \P(\R')$ such that
  $s\R \sse L$. Since $\R = \R' \times \R''$ we have $u \leq
  s\R$. Thus, $L$ is \locmon.
\end{proof}

\subsection{Connected case}\label{ssec:con}
Recall that a distributed alphabet $(\SS,\dom)$ is \emph{connected} if it cannot
be partitioned as $\SS=\SS_1 \cup \SS_2$ such that $\SS_1 \times \SS_2
\subseteq I$ with $\SS_1 \not=\es\not=\SS_2$. For connected
$(\SS,\dom)$ we obtain a nicer characterization of \locmon sets:

\begin{lemma}\label{lem:concase}
Let $(\Sigma, \dom)$ be connected. Then 
$L$ is \locmon \IFF 
$$\forall s \in \P \; \exists s\leq  t \in \P: \;  t\R \sse L \vee  t\R \sse L^c.$$
\end{lemma}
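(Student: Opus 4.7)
The direction $(\Leftarrow)$ is immediate from \refdef{df:locmon}: if $s \le t$ with $t \in \P$, then $t \in t\R$, so $s \le t\R$ as required. For $(\Rightarrow)$, take $s \in \P$ and apply local monitorability of $L$ to obtain a prime $t_0 \in \P$ with $\{s, t_0\}$ coherent (i.e., $s \le t_0\R$) and, by symmetry between $L$ and $L^c$, $t_0\R \sse L$. Coherence provides the join $u := s \sqcup t_0 \in \R$, a finite trace satisfying $s \le u$ and $t_0 \le u$. If we can produce any prime $t \in \P$ with $u \le t$, then $s \le u \le t$ and $t\R \sse t_0\R \sse L$, which is precisely the conclusion of the lemma.

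Thus the whole argument reduces to a purely combinatorial claim: in a connected distributed alphabet, every finite trace admits some prime extension. I would prove this by induction on $k = |\max(u)|$. The case $k=1$ is trivial. For $k \ge 2$, the maximal events of $u$ sit at pairwise disjoint process sets $D_1, \dots, D_k$. Connectedness of $(\Sigma, \dom)$ is equivalent to connectedness of the graph on $\Proc$ whose edges join processes that co-occur in the domain of some action, so for any two chosen $D_i, D_j$ there is a chain of actions $c_1, \dots, c_\ell \in \Sigma$ with $\dom(c_1) \cap D_i \neq \es$, $\dom(c_\ell) \cap D_j \neq \es$, and $\dom(c_m) \cap \dom(c_{m+1}) \neq \es$ throughout. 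Appending this chain to $u$ propagates a new successor event from $D_i$ along the path towards $D_j$, so that the two old maxima $m_i$ and $m_j$ merge below a common later event.

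The main technical obstacle is the bookkeeping in this inductive step: each appended action may absorb some old maximal events but can also introduce a new maximal event, and one must verify that $|\max(\cdot)|$ strictly decreases. The cleanest way around this is to strengthen the auxiliary claim to the following form: for every $a \in \Sigma$ and every finite trace $u$, there is some $v \in \Sigma^*$ such that in the trace corresponding to $uva$ the final $a$-event is the unique maximum. By connectedness, every process in $\Proc$ admits a path (in the process graph) to some process in $\dom(a)$; appending actions along such paths for every currently maximal event of $u$ propagates each old maximum into $\dom(a)$, after which firing the final $a$ yields a prime trace. With this strengthened claim in hand, the reduction in the first paragraph closes the proof.
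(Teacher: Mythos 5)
Your proof is correct and takes essentially the same route as the paper: both reduce the forward direction to the existence of a prime extension of the join $s \sqcup t_0$, which holds because $(\Sigma,\dom)$ is connected. The paper dismisses that combinatorial fact as clear, whereas you prove it via the strengthened claim that appending dependence-graph paths funnels every maximal event of $u$ below one final $a$-event; that elaboration is sound (your aside equating connectedness of the alphabet with connectedness of the process graph is only off for processes carrying no actions, which never occur in a trace and so do not affect the argument).
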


\begin{proof}
  Let $L$ be such that $\forall s \in \P\; \exists t \in \P: \; s \leq
  t\R \sse L \vee s \leq t\R \sse L^c.$ We have to show that we can
  choose $s$ to be a prefix of $t$. But this is clear: if $s \leq
  t\R$, then there is a prime $p$ with $s\leq p$ and $t \leq p$. The
  result follows because $p\R \sse t\R$ in this case.
\end{proof}

\begin{proposition}\label{prop:bool}
The following assertions are equivalent.
\begin{enumerate}
\item $(\Sigma, \dom)$ is connected.
\item The family of \locmon sets is closed under finite union. 
\item The family of \locmon sets is a Boolean algebra. 
\end{enumerate}
\end{proposition}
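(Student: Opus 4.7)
The plan is to prove the cycle $(1) \Rightarrow (3) \Rightarrow (2) \Rightarrow (1)$. The step $(3) \Rightarrow (2)$ is trivial, and \locmon sets are always closed under complement because Definition~\ref{df:locmon} is symmetric in $L$ and $L^c$. Combined with De Morgan, the Boolean algebra claim in $(1) \Rightarrow (3)$ therefore reduces to closure under union, and the real substance of the proof lies there and in $(2) \Rightarrow (1)$.

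For $(1) \Rightarrow (3)$, I would take two \locmon sets $L_1, L_2$ and establish closure under union by iterating Lemma~\ref{lem:concase}. Given $s \in \P$, first produce a prime $t_1 \geq s$ deciding $L_1$. If $t_1 \R \sse L_1$, then already $t_1 \R \sse L_1 \cup L_2$ and we are done. Otherwise $t_1 \R \sse L_1^c$, and I would extend $t_1$ via Lemma~\ref{lem:concase} applied to $L_2$ to obtain a prime $t_2 \geq t_1$ deciding $L_2$. Since $t_2 \R \sse t_1 \R \sse L_1^c$, the two subcases yield $t_2 \R \sse L_2 \sse L_1 \cup L_2$ or $t_2 \R \sse L_1^c \cap L_2^c = (L_1 \cup L_2)^c$. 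Lemma~\ref{lem:concase} then certifies that $L_1 \cup L_2$ is \locmon.

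For the contrapositive of $(2) \Rightarrow (1)$, assume $(\SS,\dom)$ is disconnected with non-empty $\SS_1, \SS_2$ and $\R = \R_1 \times \R_2$. The counterexample I propose is $L_1 = \R_1$, $L_2 = \R_2$. Each $L_i$ is \locmon: if $s \in \P(\R_j)$ with $j \neq i$, then $s$ itself satisfies $s \R \sse L_i^c$ since every extension retains the $\SS_j$ letters of $s$; and if $s \in \P(\R_i)$, then any single letter $b \in \SS_j$ is a prime coherent with $s$ (cross-component letters are independent) and $b \R \sse L_i^c$.

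The main obstacle is showing that $L_1 \cup L_2$ fails to be \locmon. The crucial observation is that in the disconnected case no prime trace mixes components: if $t \in \P$ has unique maximum $m$ with $\lambda(m) \in \SS_i$, then for every other $x \in t$ the relation $x < m$ is witnessed by a chain of covers whose consecutive labels are pairwise dependent; since $\SS_1 \times \SS_2 \sse I$, no such chain crosses components, forcing $\lambda(x) \in \SS_i$ and hence $t \in \R_i$. Consequently, for $s \in \P(\R_1)$ and any prime $t$ coherent with $s$, we have $t \in \R_1 \cup \R_2 = L_1 \cup L_2$, which rules out $t \R \sse (L_1 \cup L_2)^c$; and $t \R$ also contains genuinely mixed extensions $t \cdot \sigma$ with $\sigma$ in the opposite component, which lie in $(L_1 \cup L_2)^c$, ruling out $t \R \sse L_1 \cup L_2$. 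Hence no prime $t$ witnesses the \locmon condition for $s$, and $L_1 \cup L_2$ is not \locmon, completing the contrapositive.
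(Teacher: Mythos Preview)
Your proof is correct and follows essentially the same approach as the paper: both reduce the Boolean-algebra claim to closure under union via the symmetry of Definition~\ref{df:locmon}, both establish closure under union in the connected case by iterating Lemma~\ref{lem:concase} (first decide $L_1$, then extend and decide $L_2$), and both exhibit a counterexample in the disconnected case built from two prime-closed languages living in different components. The only cosmetic difference is the counterexample itself: the paper takes $L=$ ``no occurrence of $a$'' and $K=$ ``no occurrence of $b$'' for single letters $a,b$ in distinct components and simply invokes Lemma~\ref{lem:b} (prime-closed implies \locmon), whereas you take the full components $L_i=\R_i$ and verify monitorability by hand; the ``primes do not mix components'' observation you spell out is exactly what the paper uses implicitly when it asserts that every prime lies in $L\cup K$.
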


\begin{proof} Since the \locmon property is symmetric for $L,L^c$, the
  last two items of the proposition are equivalent. 
Let $(\Sigma, \dom)$ be connected, we show that \locmon is preserved by
taking finite unions. Let $L$ and $K$ be
  \locmon and consider $s\in
  \P$. If we find $s \leq t \in \P$ and either $t\R \sse L $ or $t\R
  \sse K$, we are done.  Hence there is $s \leq t\in \P$ and $t\R \sse
  L^c $. Now, we may assume that there is $t \leq u\in \P$ and $u\R
  \sse K^c$. But then $s \leq u$ and $u\R \sse (L\cup K)^c$. 

 Conversely, let $a, b \in \Sigma$ be in different connected components
  of $(\Sigma, \dom)$ and let $L=$ ``no occurrence of $a$'' and $K=$ ``no
  occurrence of $b$''. Both sets are \locmon, since they are
  prime-closed. 
  However, for
  every prime $s$ we have $s\in L\cup K$ and $s\R \cap (L\cup K)^c
  \neq \es$. This shows that $L\cup K$ is not \locmon.
 \end{proof}

Again, for connected alphabets and a family of languages, we can make
the condition to be \locmon more precise by using \reflem{lem:concase}. Indeed, 
if $(\Sigma, \dom)$ is connected, then a family $\os{L_1, \ldots, L_n}$ is \locmon
\IFF 
$$\forall s \in \P \; \exists\; s\leq  t \in \P \; \exists 1\leq i
\leq n: \;  t\R \sse L_i.$$

%


\begin{theorem}\label{thm:nix}
Let $(\Sig, \dom)$ be connected, 
and $L_1$, \ldots, $L_n$ be subsets of $\R$ such that 
\begin{enumerate}
\item $\R= L_1 \cup \cdots \cup L_n$.
\item Each $L_k$ is a countable union of prime-closed sets. 
\end{enumerate}
Then the family $\os{L_1, \ldots, L_n}$ is \locmon.
\end{theorem}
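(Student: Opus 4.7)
The plan is to argue by contradiction via a Baire-style diagonalization. Assume the family $\os{L_1,\ldots,L_n}$ is not \locmon. By \reflem{lem:concase}, there is some prime $s \in \P$ such that for every prime $t \geq s$ and every $1 \leq i \leq n$, $t\R \not\sse L_i$. Writing each $L_k = \bigcup_{j \in \N} C_{k,j}$ as a countable union of prime-closed sets with $C_{k,j} \sse L_k$, this upgrades to $t\R \not\sse C_{k,j}$ for every $k,j$ and every prime $t \geq s$. I would then reenumerate the countable family $(C_{k,j})_{k,j}$ as $(C_m)_{m \in \N}$ and write $C_m^c = U_m \R$ with $U_m \sse \P$; then $t\R \not\sse C_m$ is equivalent to the existence of some $u \in U_m$ coherent with $t$, so that the join $t \sqcup u$ exists in $\R$.

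The core step is an inductive construction of a chain $s = t_0 \leq t_1 \leq t_2 \leq \cdots$ of prime traces. Given $t_{m-1}$, I pick $u \in U_m$ coherent with $t_{m-1}$, form $v = t_{m-1} \sqcup u$, and extend $v$ to a prime trace $t_m \geq v$; then $t_m \geq u \in U_m$, so $t_m \in U_m \R$ and therefore $t_m\R \sse U_m\R = C_m^c$. Taking the limit $t_\infty = \bigsqcup_m t_m$, which is a real trace (finite or infinite) as the directed union of a chain of finite traces, we have $t_\infty \in t_m\R \sse C_m^c$ for every $m \geq 1$. Hence $t_\infty \in \bigcap_m C_m^c = \R \sm \bigcup_m C_m$; but $\bigcup_m C_m = \bigcup_k L_k = \R$ by assumption~(1), a contradiction.

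The main obstacle will be justifying that at each step the finite trace $v = t_{m-1} \sqcup u$ can indeed be extended to a \emph{prime} trace $t_m$; this is exactly where connectedness of $(\Sigma,\dom)$ is essential. The distinct maximal events of any finite trace sit on pairwise disjoint sets of processes, and by walking along the connected letter-dependence graph one can successively append actions whose domains bridge these disjoint locations and eventually collapse all max events into a single one. Without connectedness such a prime extension need not exist, which is precisely what forced the separate treatment of the disconnected case in \refsec{ssec:discon}.
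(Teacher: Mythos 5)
Your proposal is correct and is essentially the paper's own argument: both are Baire-style diagonalizations that build an increasing chain of primes starting from a witness $s$ of non-monitorability, use coherence with some $u\in U_m$ at each step, re-primify via connectedness, and derive a contradiction because the limit trace lies in $\bigcap_m C_m^c=\es$. The only difference is bookkeeping --- you merge all prime-closed pieces into one enumeration, while the paper interleaves separate sequences $x_k,y_k$ for the (complements of the) individual $L_i$ --- and your closing remark correctly identifies the prime-extension step as the one place where connectedness is used.
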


\begin{proof}
  We give the proof for $n=2$, the one for $n>2$ is similar. Let
  $L=L_1^c$ and $K=L_2^c$. Write $L = \bigcap_{i \ge 0} U_i\R$
  and $K = \bigcap_{i \ge 0} V_i\R$ where all $U_i,V_i \subseteq
  \P$.  Without restriction we have $U_0\R = V_0 \R=\R$. 

  By contradiction, assume that $\oneset{L_1,L_2}$ is not
  \locmon. This means that we can find some $s \in\P$ such that for
  all $t \in\P$ with $\oneset{s,t}$ coherent it holds that $t\R \cap L
  \neq \es \neq t\R \cap K$.  Let $p_0= x_0 = q_0 = y_0 = s$.

 By induction let for some $k\geq 1$ prime traces $p_i$, $x_i$, $q_i$, and
 $y_i$ for all $0 \leq i < k$ be defined such that $U_i \ni p_i\leq
 x_i\leq y_i$, $V_i \ni q_i \leq y_i$, and $y_{i-1} \leq x_i$.
 
 We define $x_k,p_k$ as follows. Since $s \leq y_{k-1}\in \P$ we have
 by assumption $y_{k-1}\R \cap L \not=\es$, and thus we find
 $y_{k-1}\leq x \in L$. Thus, there is $p_k \in U_k$ with $p_k \leq
 x$. The set $\smallset{y_{k-1},p_k}$ is coherent, hence there is
 common finite trace $w$ with $y_{k-1}\leq w$ and $p_k\leq w$. Since
 $(\Sig, \dom)$ is connected, we find some prime $x_k\in \P$ with $w
 \leq x_k$. The definition of $y_k$ follows the same pattern.  We have
 $s \leq x_1 \leq y_1 \leq x_2 \cdots$ and $x = \sqcup\set{x_i}{i\in
   \N}$ exists.  However, $x\in \bigcap_{i \ge 0} U_i\R \cap
 \bigcap_{i \ge 0} V_i\R$. Contradiction, because $L \cap K = \es$.
\end{proof}

\begin{remark} 
  Notice that the above proof still works if $(\SS,\dom)$ has only two
  connected components. In the general case it is open whether the
  statement of \refthm{thm:nix} still holds.
\end{remark}




\section{Infinite traces}\label{sec:inf}
Prime-closed languages are prefix closed, so they always intersect. In
particular, for any language $L \subseteq \R$, it can never happen
that both $L$ and $L^c$ are countable unions of prime-closed sets (or
equivalently, countable intersections of prime-open sets), as required by
\refthm{thm:nix}. 

Thus, in order to define an trace analogue of $G_\delta \cap F_\sigma$
we will restrict our attention to infinite traces where a (given)
subset $\GG$ of processes is active infinitely often and ``sees'' all
other processes. In this way monitoring can be performed by processes in $\GG$.   Another motivation for the new notion is due to
the fact that in order to monitor a language we should be able to
gather information into longer and longer prime prefixes. 

For a  finite trace $t$ we write $\max(t) \subseteq \GG$ if $\dom(a)\cap\GG
    \not=\es$ for each $a \in \max(t)$.

\begin{definition}\label{def:inf}
  Let $\GG$ be a (non-empty) subset of $\Proc$. A trace $x$ is called
  \emph{$\GG$-infinite} if 
  \begin{itemize}
  \item Every process from $\GG$ has infinitely many actions in $x$.
  \item $x$ can be written as $x=x_0 x_1 \cdots$ such that $\max(x_n)
    \subseteq \GG$ for each $n\ge 0$.
\item $\alf(x)$ is connected.
  \end{itemize}

The set of $\GG$-infinite traces is written as $\R_\GG$.
\end{definition}

\begin{remark}
  If $\GG$ is a singleton, then for every trace $x \in \R_\GG$, both $\alf(x)$
  and $\alfinf(x)$ are connected (and non-empty).
\end{remark}

In the following everything is within $\GG$-infinite traces, for a
fixed set $\GG\subseteq\Proc$.
In particular, the notion of closed and open are meant to be
induced. The notion of \locmon is also relative to $\R_\GG$: a set $L
\subseteq \R_\GG$ is \locmon if $\forall s \in \P(\R_\GG) \, \exists s \leq  t
\in \P(\R_\GG) : \; t\R \cap \R_\GG \sse L \vee t\R \cap \R_\GG \sse
L^c$ (where $L^c=\R_\GG\setminus L$).

\begin{definition}\label{def:}
Let $\GG \subseteq \Proc$ be a non-empty set of processes.
\begin{enumerate}
\item A set $X \subseteq \R_\GG$ is prime-$G_\delta$ if it has
  the form $X=\bigcap_{i \ge 0} U_i$ where all
  $U_i$ are prime-open in $\R_\GG$. The family of  prime-$G_\delta$-sets
  is denoted $\PGd$.
\item A set $X \subseteq \R_\GG$ is prime-$F_\sigma$ if its complement is
  prime-$G_\delta$. The family of   prime-$G_\delta$-sets is
  denoted $\PFs$.
\end{enumerate}

\end{definition}

\begin{example}
  Let $\GG=\Proc=\oneset{\alpha,\beta}$ and $\SS=\oneset{a,b,d}$ with
  $\dom(a)=\{\alpha\}$, $\dom(b)=\{\beta\}$ and
  $\dom(d)=\oneset{\alpha,\beta}$. Let $L\subseteq \R_\GG$ contain all
  traces without the (trace) factor $abd$. Such traces are formed
  either by a trace from
  $((a^*+b^*)d^+)^*(a^*+b^*)d^+$ followed by $a^\oo b^\oo$, or they
  belong to $((a^*+b^*)d^+)^\oo$. Clearly, $L$ is prime-closed. The
  complement of $L$ is in $\PFs$, since $L^c=\bigcup_{w
    \in \SS^*, i,j >0} X_{w,i,j}$  where $X_{w,i,j}$ contains all traces from $\R_\GG$
  with prefix $w a^i b^j d$. Each $X_{w,i,j}$ is
  prime-closed. 
  
\end{example}
 


The next lemma generalizes the case of $\omega$-words. Note
that we need the restriction to $\R_\GG$ (or some similar restriction). As an example, consider
$\SS=\oneset{a,b}$ with $(a,b) \in I$. The language $L=a\R$ is
prime-open. But its complement $L^c=b^\infty$ cannot be
written as countable intersection of prime-open sets in $\R$, since we cannot
avoid occurrences of $a$ in such sets.

\begin{lemma}\label{lem:pa}
Prime-closed sets of $\R_\GG$  are in $\PGd$.
\end{lemma}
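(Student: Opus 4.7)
The plan is to reduce the statement to showing a single local property: for every prime $p\in\P$, the set $V_p:=\{t\in\R_\GG:p\not\le t\}$ is prime-open in $\R_\GG$. Once this is established, the lemma follows immediately: any prime-closed $L\sse\R_\GG$ has the form $L=\R_\GG\sm\bigcup_{p\in U}(p\R\cap\R_\GG)$ for some $U\sse\P$, and since $\P$ is countable we may enumerate $U=\{p_1,p_2,\ldots\}$, giving $L=\bigcap_{i\ge 1}V_{p_i}$, a countable intersection of prime-opens, i.e. a member of $\PGd$.

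To show $V_p$ is prime-open, I would take any $t\in V_p$, i.e.~a $\GG$-infinite trace with $p\not\le t$, and produce a prime prefix $q\in\P(\R_\GG)$, $q\le t$, such that $q\R\cap\R_\GG\sse V_p$---that is, no $r\in\R_\GG$ with $q\le r$ admits $p\le r$. Let $a=\lambda(e_p)$ denote the label of the unique maximal event of $p$ and let $p^-=p\sm\{e_p\}$ be its strict past. Any witness $g$ in an extension $r$ for $p\le r$ must be an $a$-labelled event whose strict past in $r$ is isomorphic to $p^-$. If $g\in q$, then already $p\le q\le t$, contradicting $p\not\le t$; if instead $g$ is added in $r\sm q$, its past in $r$ must contain the $\dom(a)$-view $\pi_a(q)$ of $q$ (the downward closure in $q$ of the maximal events on $\dom(a)$-processes). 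Choosing $q$ large enough that $\pi_a(q)$ strictly dominates $p^-$ as a finite trace then rules out every possible new witness.

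The main obstacle is showing that such a $q$ always exists; this is precisely where $\GG$-infiniteness is essential (the remark immediately before the lemma exhibits a counterexample in the unrestricted setting $\R$). I would rely on two consequences of $t\in\R_\GG$: first, the decomposition $t=x_0x_1\cdots$ with $\max(x_n)\sse\GG$ yields arbitrarily large prime prefixes of $t$ whose maximum lies on a $\GG$-process; and second, the connectivity of $\alf(t)$ combined with the chunk-boundary synchronization forces almost all $\GG$-events of $t$ to lie in the causal future of any fixed event (since an infinite chain of $\GG$-events parallel to a given $e^*$ would live entirely on processes disjoint from $\dom(\lambda(e^*))$, in contradiction with the forward synchronization imposed at chunk ends). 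Together these two properties let me enlarge a given prime prefix of $t$ along its causal future past any $\dom(a)$-obstacle until $\pi_a(q)$ overshoots $p^-$, which delivers the required blocking prefix $q$ and completes the proof.
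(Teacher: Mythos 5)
Your reduction to the claim that $V_p=\{t\in\R_\GG \mid p\not\le t\}$ is prime-open in $\R_\GG$ is where the proof breaks: that claim is false in general, and its failure is essentially the whole content of the lemma. Concretely, take $\Proc=\{\alpha,\gamma\}$, $\GG=\{\alpha\}$, and letters $c,a,d$ with $\dom(c)=\{\alpha\}$, $\dom(a)=\{\gamma\}$, $\dom(d)=\{\alpha,\gamma\}$. Let $p=[a]$ and $t=c^\oo\in\R_\GG$, so $t\in V_p$. Every prime prefix of $t$ is some $q=c^k$, and for each $k$ the trace $r$ of the $\oo$-word $c^k a d\, c^\oo$ lies in $\R_\GG$ (the factor $c^kad$ has unique maximum $d$ with $\dom(d)\cap\GG\neq\es$, and $\{a,c,d\}$ is connected via $d$), satisfies $q\le r$, and has the $a$-event as a minimal event, so $p\le r$. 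Hence no prime prefix of $t$ blocks $p$, and $V_p$ is not prime-open. The step that fails in your second paragraph is the treatment of a new witness $g\in r\sm q$: all you get is that the past of $g$ in $r$ contains the $\dom(a)$-view of $q$, and if some process of $\dom(a)$ occurs only finitely often (or never) in $t$ --- which is possible as soon as $\GG\neq\Proc$ --- that view never overshoots $p^-$, no matter how far you grow $q$ inside $t$. Your appeal to connectivity and the chunk-boundary synchronization controls the events of $t$ itself, but the witness $g$ lives in a different trace $r\in q\R\cap\R_\GG$ which need not extend $t$ and may use processes that $t$ never touches. (Your argument does go through in the special case where every process of $\dom(a)$ acts infinitely often in $t$, e.g.\ when $\GG=\Proc$, but not in general.)

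The outer reduction is fine as far as it goes: $\P$ is countable and $L=\bigcap_{p\in U}V_p$, so it would suffice to show each $V_p\in\PGd$ rather than prime-open --- but each $V_p$ is itself prime-closed, so that is just an instance of the lemma and nothing has been gained. The paper's proof does not try to block forbidden primes one at a time with a single prefix; instead it writes $L=\bigcap_{k\in\N,\,\alpha\in\GG}\P(L)_{\alpha,k}\,\R_\GG$, where $\P(L)_{\alpha,k}$ is the set of prime prefixes of $L$ of size at least $k$ whose maximal action touches process $\alpha$. The point is that for $x$ in the right-hand side the witnessing prefixes $p_{\alpha,k}$ exhaust all of $x$ (in $\R_\GG$ every event is eventually dominated by events on $\GG$-processes), so $x=\sqcup C$ for a coherent $C\sse\P(L)$ and prime-closedness applies via Lemma~\ref{lem:a}. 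The unbounded quantification over $k$ is essential and cannot be collapsed to a single blocking prefix, as the example above shows; you would need to restructure your argument around growing families of prime prefixes rather than a one-shot witness $q$.
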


\begin{proof}
Let $L \subseteq \R_\GG$ be prime-closed. By definition, every $\sqcup C \in
\R_\GG$ where $C$ is coherent and $C \sse \P(L)$, belongs to $L$. For $K
\subseteq \P$, $\alpha \in\GG$ and $k\in\N$ let 
\[K_{\alpha,k} = \set{p\in K}{\abs p \geq k, \;
  \alpha\in\dom(\max(p))}.\] We claim that
\[L=\bigcap_{k\in
  \N,\, \alpha \in \GG} \P(L)_{\alpha,k}\, \R_\GG \,.
\]
The inclusion from left to right follows from $L \subseteq \R_\GG$ and
the definition of $\R_\GG$. Let $x \in \R_\GG$ be such that for every
$k\in\N$ and $\alpha\in\GG$, there is some $p_{\alpha,k} \le x$ with
$p_{\alpha,k} \in \P(L)_{\alpha,k}$. By definition of $\R_\GG$ and of
$ \P(L)_{\alpha,k}$, we have that $x=\sqcup \set{p_{\alpha,k}}{k\in
  \N,\, \alpha \in \GG}$. Hence, $x$ is of the form $\sqcup C$ for $C
\subseteq \P(L)$ coherent, and thus in $L$.
\end{proof}

\begin{theorem}\label{thm:main}
\begin{enumerate}
\item $\PGd \cap \PFs$ is a Boolean algebra containing all 
prime-open and all prime-closed subsets of $\R_\GG$. 
\item All $\PGd \cap \PFs$ subsets of  $\R_\GG$ are \locmon. 
\end{enumerate}
\end{theorem}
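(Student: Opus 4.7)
For Part 1, I plan to verify the Boolean algebra axioms and the two containment statements. Closure of $\PGd \cap \PFs$ under complement is immediate from the definition of $\PFs$ as the family of complements of $\PGd$-sets. Closure of $\PGd$ under countable intersection follows by re-indexing (flatten a double-indexed family into a single sequence), and dually $\PFs$ is closed under countable union. For finite union in $\PGd$ I use the distributive identity $(\bigcap_i U_i) \cup (\bigcap_j V_j) = \bigcap_{i,j} (U_i \cup V_j)$, together with the already noted closure of prime-open sets under finite union; dually, finite intersection in $\PFs$ follows from $(U\R)^c \cap (V\R)^c = ((U \cup V)\R)^c$. Every prime-open $O \sse \R_\GG$ then lies in $\PGd$ trivially and in $\PFs$ because $O^c$ is prime-closed, hence in $\PGd$ by \reflem{lem:pa}; the prime-closed case is symmetric.

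For Part 2 my plan is to adapt the zig-zag argument from the proof of \refthm{thm:nix} to the ambient space $\R_\GG$. Write $L = \bigcap_{i \ge 0} U_i$ and $L^c = \bigcap_{j \ge 0} V_j$, where each $U_i$ (resp. $V_j$) is prime-open in $\R_\GG$, i.e.\ $U_i = \bigcup_{p \in P_i}(p\R \cap \R_\GG)$ for some $P_i \sse \P$ (similarly $V_j$ via $Q_j \sse \P$), normalizing $U_0 = V_0 = \R_\GG$. Assuming for contradiction that $L$ is not \locmon, I obtain some $s \in \P(\R_\GG)$ such that for every prime $t \in \P(\R_\GG)$ with $s \le t$, both $t\R \cap \R_\GG \cap L \neq \es$ and $t\R \cap \R_\GG \cap L^c \neq \es$. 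I then build inductively primes $s = q_0 \le p_1 \le q_1 \le p_2 \le q_2 \le \cdots$ in $\P(\R_\GG)$ together with witnesses $r_k \in P_k$ with $r_k \le p_k$ and $r'_k \in Q_k$ with $r'_k \le q_k$. To produce $p_k$ from $q_{k-1}$: apply the contradiction assumption at $t = q_{k-1}$ to pick $x \in q_{k-1}\R \cap \R_\GG \cap L$; since $L \sse U_k$, find a witness $r_k \in P_k$ with $r_k \le x$; because $\{q_{k-1}, r_k\}$ is coherent inside $x$, choose $p_k$ to be a prime prefix of $x$ that dominates both of them and, in addition, performs at least one fresh event on every process of $\GG$. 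The step from $p_k$ to $q_k$ is symmetric, using $L^c$ and $Q_k$.

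The supremum $x^\star = \bigsqcup_k p_k = \bigsqcup_k q_k$ then satisfies $r_k \le p_k \le x^\star$ for each $k$, so once $x^\star \in \R_\GG$ is established, $x^\star \in r_k\R \cap \R_\GG \sse U_k$ for all $k$ gives $x^\star \in L$; symmetrically $x^\star \in L^c$, contradicting $L \cap L^c = \es$. Verifying $x^\star \in \R_\GG$ is the main obstacle, but the construction handles each of the three defining conditions: the per-round padding forces every process of $\GG$ to fire infinitely often; selecting $p_k$ and $q_k$ so that their maximal events are located in $\GG$ yields a decomposition $x^\star = y_0 y_1 \cdots$ with each $\max(y_n) \sse \GG$; and $\alf(x^\star) = \bigcup_k \alf(p_k)$ is connected as a union of connected sets (each $\alf(p_k)$ is connected because $p_k$ is prime) all of which contain the non-empty $\alf(s)$. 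The delicate technical fact enabling the padding is that each witness $x$ lies in $\R_\GG$, supplying arbitrarily long prime prefixes with the desired max-location and fresh $\GG$-events: this is exactly what the restriction to $\R_\GG$ buys us, and it is also why the statement fails in $\R$ in general (the remark following \refthm{thm:nix} and the example opening \refsec{sec:inf} both illustrate this).
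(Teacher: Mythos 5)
Part~1 of your proposal is correct and is essentially the paper's (much terser) argument, spelled out. In Part~2 the overall architecture --- the zig-zag construction of \refthm{thm:nix} transplanted into $\R_\GG$, with per-round padding so that the limit lands in $\R_\GG$ --- is exactly the paper's, but there is a genuine gap in how you produce the primes $p_k,q_k$. You choose $p_k$ to be ``a prime prefix of $x$ that dominates both'' $q_{k-1}$ and $r_k$, justified by the claim that membership of the witness $x$ in $\R_\GG$ supplies arbitrarily long prime prefixes with the desired max-location and fresh $\GG$-events. That claim is false. Take $\Proc=\GG=\oneset{\alpha,\beta}$, $\SS=\oneset{a,b,c}$ with $\dom(a)=\oneset{\alpha}$, $\dom(b)=\oneset{\beta}$, $\dom(c)=\oneset{\alpha,\beta}$, and let $x=c\,a^\oo b^\oo$ (one $c$, then $\alpha$ and $\beta$ run independently forever). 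Then $x\in\R_\GG$: both processes act infinitely often, the block condition is vacuous since $\GG=\Proc$, and $\alf(x)$ is connected. Yet the only prime prefixes of $x$ are $c$, $c\,a^i$ and $c\,b^j$: none of them dominates the finite prefix $c\,a\,b$, and none beyond $c$ has fresh events on both processes. So if, say, $q_{k-1}=ca$ and $r_k=cb$, your construction cannot proceed.

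The repair is precisely the paper's move: do not insist that $p_k$ be a prefix of $x$. Form the finite trace $w=q_{k-1}\sqcup r_k$ (which exists, being a prefix of $x$) and extend $w$ \emph{outside of} $x$ by a word over a connected subalphabet $\SS'\sse\SS$, fixed once and for all so that for each $\alpha\in\GG$ it contains a letter whose domain contains $\alpha$; an appropriate walk over $\SS'$ ending in a single letter turns $w$ into a prime $p_k$ with $\max(p_k)\sse\GG$ and a fresh event on every process of $\GG$. One must then check that $p_k\in\P(\R_\GG)$ --- i.e.\ that it is a prime prefix of some $\GG$-infinite trace, so that the non-monitorability hypothesis may be applied at $p_k$ --- and that the limit $x^\star$ lies in $\R_\GG$; both follow because $\alf(p_k)\sse\alf(x)\cup\SS'$ is connected (each of $\alf(x)$ and $\SS'$ touches the processes of $\GG$) and because the $\SS'$-padding in every round yields the $\GG$-infiniteness and the block decomposition. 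Your remaining steps (the verification of the three conditions for $x^\star\in\R_\GG$, and the final contradiction $x^\star\in L\cap L^c$) are fine once the primes are built this way.
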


\begin{proof}
 $\PGd$ is closed under union. Hence, $\PGd \cap \PFs$ is a Boolean algebra.
It contains all 
prime-open and all prime-closed subsets of $\R_\GG$ by \reflem{lem:pa}. 

The proof of the second claim follows along the same lines as
the one of \refthm{thm:nix}. Assume that $\R_\GG \not=\es$ and choose some
connected subalphabet $\SS'$ of $\SS$ that contains for each
$\alpha\in\GG$ some letter $a$ with $\alpha\in\dom(a)$. The prime
traces $x_k,y_k$ can be chosen such that $\max(x_k) \subseteq \GG$,
$\max(y_k) \subseteq \GG$, and $\alf(x_k^{-1}y_k)=\alf(y_{k-1}^{-1}x_k)=\SS'$. Thus,
$x=\sqcup_{i} x_i \in \R_\GG$. 
\end{proof}

Asynchronous B\"uchi and Muller automata have been studied
in~\cite{gaspet95,dm94}. McNaughton's theorem \cite{McNau66}
stating the equivalence of 
non-deterministic B\"uchi and deterministic
Muller automata over omega-word languages, extends to recognizable languages of infinite traces
and asynchronous automata~\cite{dm94}. If we restrict to traces from $\R_\GG$,
then the B\"uchi and Muller acceptance conditions are simpler:

\begin{definition}
  Let $\GG \subseteq \Proc$ be a non-empty set of processes, and let
  $\Aa=\tuple{(S_\alp)_{\alp \in \Proc}, (\delta_a)_{a \in \SS}, s^0}$
  be an asynchronous automaton.
\begin{enumerate}
\item A B\"uchi acceptance condition is a set $F\subseteq S_\GG$.

An infinite run $s^0=s_0,a_0,s_1,a_1,\ldots$ of $\Aa$ is accepting if
for some $f_\GG \in F$ and for every $\alp\in\GG$, there are
infinitely many $n\ge 0$ with $(s_n)_\alp=f_\alp$. 
\item A Muller acceptance condition is a set $\Ff\subseteq \prod_{\alp
    \in \GG} 2^{S_\alp}$. 

An infinite run $s^0=s_0,a_0,s_1,a_1,\ldots$ of $\Aa$ is accepting if
for some $T_\GG \in \Ff$ and for every $\alp\in\GG$, the set of states
from $S_\alp$ such that $(s_n)_\alp=f_\alp$ for infinitely many $n$,
is precisely $T_\alp$. 
\end{enumerate}

\end{definition}

The language $\Ll(\Aa)$ is the set of all traces from $\R_\GG$ that
have an accepting run. 
The next result is a generalization from $\omega$-word languages to
$\R_\GG$ trace languages:

\begin{theorem}\label{thm:GdDetAsyn}
  Let $L \subseteq \R_\GG$ be recognizable. Then $L$ is in
  $\PGd$  \IFF $L$ is accepted by a deterministic B\"uchi
  asynchronous automaton.
\end{theorem}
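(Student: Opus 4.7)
The plan is to prove both implications. The easier direction, deterministic B\"uchi $\Rightarrow \PGd$, proceeds by unfolding the acceptance condition. Given a deterministic asynchronous B\"uchi automaton $\Aa$ with accepting set $F \subseteq S_\GG$, a trace $x \in \R_\GG$ belongs to $\Ll(\Aa)$ iff for some $f_\GG \in F$, every $\alp \in \GG$ visits local state $f_\alp$ infinitely often. For each $f_\GG \in F$, $\alp \in \GG$, and $k \in \N$, let
\[
V^\alp_{f_\GG,k} = \set{p \in \P(\R_\GG)}{|p| \ge k,\ \alp \in \dom(\max(p)),\ \text{the run of }\Aa\text{ on }p\text{ ends with }\alp\text{-local state }f_\alp}.
\]
Then $V^\alp_{f_\GG,k}\R$ is prime-open in $\R_\GG$, and $L_{f_\GG} = \bigcap_{k,\alp} V^\alp_{f_\GG,k}\R$ lies in $\PGd$. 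Since $\Ll(\Aa) = \bigcup_{f_\GG \in F} L_{f_\GG}$ and prime-open sets are closed under arbitrary union (the remark after the definition of prime-open), the class $\PGd$ is closed under finite union by the identity $\bigcap_i A_i \cup \bigcap_j B_j = \bigcap_{i,j}(A_i \cup B_j)$, whence $\Ll(\Aa) \in \PGd$.

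For the converse direction, $\PGd \Rightarrow$ deterministic B\"uchi, the idea is to adapt Landweber's theorem for $\omega$-words to the trace setting. Since $L$ is recognizable, by~\cite{dm94} we fix a deterministic asynchronous Muller automaton $\cB$ accepting $L$ over $\R_\GG$, with condition $\Ff \subseteq \prod_{\alp \in \GG} 2^{S_\alp}$. The target is a deterministic asynchronous B\"uchi automaton whose acceptance set $F \subseteq S_\GG$ is revisited infinitely often on exactly the runs accepted by $\cB$. Writing $L = \bigcap_n V_n\R$, the construction enriches $\cB$ with finite memory that, for every $\alp \in \GG$, tracks whether the most recent prime prefix maximal at $\alp$ witnesses a strictly higher $V_n$ than the one witnessed before; after replacing $V_n$ by $\bigcup_{k \ge n} V_k$ we may assume the family is decreasing, so this notion of ``strict progress'' is monotone. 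The tuples of local states in which every $\alp \in \GG$ has just made such a strict progress form the B\"uchi condition $F \subseteq S_\GG$.

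The technical heart is to realize the memory above as a finite asynchronous automaton and to verify correctness. The first obstacle is that membership of a prime prefix $p$ in $V_n$ depends in principle on the global $\cB$-state at $p$, while each process only sees its local state; this is resolved by the timestamping/gossip construction underlying Zielonka's theorem (\refthm{th:zielonka}), which lets each $\alp \in \GG$ carry at its last action a consistent image of the global $\cB$-state at that moment. The second obstacle is the bounded encoding of ``strict progress'': by recognizability of $L$, the relevant $V_n$-levels visible at a local $\cB$-state fall into only finitely many equivalence classes, so the memory can saturate. Correctness then follows because, in $\R_\GG$, every $\alp \in \GG$ is maximal in arbitrarily many prime prefixes of $x$, so infinite recurrence of $F$ is equivalent to unbounded progress on every $\alp$, which is equivalent to $x$ meeting every $V_n\R$, i.e., to $x \in L$. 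The main difficulty I anticipate is making the product with the gossip automaton \emph{deterministic and asynchronous}, and showing that the locally tracked progress levels advance in sync with the $\PGd$-decomposition; this is where the interplay between the Zielonka-style distributed simulation of $\cB$ and the monotone memory must be set up carefully.
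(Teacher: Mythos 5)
Your first direction (deterministic B\"uchi $\Rightarrow \PGd$) is correct and essentially identical to the paper's: the paper uses sets $K^f_{n,\alp}$ of prime prefixes maximal at $\alp$ on which process $\alp$ has visited $f_\alp$ at least $n$ times, and closes $\PGd$ under finite union by the same distributivity identity you invoke.

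The converse direction has a genuine gap. Your plan is to enrich the deterministic Muller automaton with finite memory recording, per process $\alp\in\GG$, whether the latest prime prefix maximal at $\alp$ has reached a ``strictly higher $V_n$'' than before. But the level function $u\mapsto\max\{n : u \text{ has a prefix witnessing } V_n\}$ is unbounded and does not factor through the state of any finite automaton: the $V_n$ form an arbitrary countable family of prime-open sets, and recognizability of $L=\bigcap_n V_n$ says nothing about the individual $V_n$ (they need not even be recognizable). Detecting a \emph{strict increase} of the level requires knowing the level, so your appeal to ``finitely many equivalence classes of $V_n$-levels visible at a local $\cB$-state'' is exactly the unproven crux, and I see no way to justify it. The classical Landweber argument --- and the paper's proof --- avoids tracking levels entirely. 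The paper first normalizes the decomposition into a single prefix-free family $K$ with $L=\set{\sqcup C}{C\sse K,\ C\text{ coherent}}$ (forcing witnesses to be maximal in $\GG$ and to grow on every process of $\GG$), and then uses the $\PGd$ property only once, in a density argument: for a reachable accepting loop $T\in\Ff$ with loop trace $t(T)$ and any coterminal loop $x$, the trace $t_0\,t(T)^{n_0}x\,t(T)^{n_1}x\cdots$ has arbitrarily large prefixes in $K$ and hence lies in $L$. This shows the Muller table is closed under reachable supersets, so the Muller condition collapses to the B\"uchi condition ``some $T_\GG$ is visited infinitely often componentwise'' on the \emph{same} automaton --- no extra memory, no gossip product, no level bookkeeping. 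A further slip: to make the family decreasing you should replace $U_n$ by $\bigcap_{k\le n}U_k$ (which stays prime-open inside $\R_\GG$ by connectedness); your $\bigcup_{k\ge n}V_k$ changes $\bigcap_n V_n$ into a $\limsup$ and is not the same language.
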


\begin{proof}
  Assume first that $L=\Ll(\Aa)$, where $\Aa$ is a deterministic
  asynchronous B\"uchi automaton, and fix a final state $f \in F$.
  For $n >0, \alp\in\GG$ we define $K^f_{n,\alp}$ as the set of all
  traces $t \in \P$ with $\alp \in \dom(\max(t))$ and such that in the
  run of $\Aa$ on $t$, at least $n$ letters on process $\alp$ are in
  state $f_\alp$. It is easy to see that the set $\bigcup_{f \in F}\bigcap_{\alp
    \in\GG, n>0} K^f_{n,\alp}\R_\GG$ is precisely
  $\Ll(\Aa)$. The remaining of the proof will show that $\PGd$ is
  closed under finite union, thus $\Ll(\Aa) \in \PGd$.

  For the converse let $L=\bigcap_{n >0} U_n \subseteq \R_\GG$ be
  recognizable, with $U_n$ prime-open in $\R_\GG$.  We first define $V_n
  =\bigcap_{k \le n} U_n$. It is not difficult to see that each $V_n$
  can be assumed to be of the form $K_n\R_\GG$ with $\max(t) \subseteq
  \GG$ for each $t \in K_n$. Let
  now $K'_n \subseteq K_n$ consist of all elements of $K_n$ that have
  no proper prefix in $K_n$. Let $K=\bigcup_{n>0} K'_n X_n$, where
  $X_n$ is the set of traces $t$ such that (1) $\max(t) \subseteq
  \GG$, (2) $|t|_\alp\ge n$ for each $\alp\in\GG$, and (3) no proper
  prefix of $t$ satisfies (1) and (2).

  Let us first show that $L= \set{\sqcup C}{C \subseteq K, \, C
    \text{ coherent}}$. The inclusion from left to  right follows from
  $L=\bigcap_{n >0} U_n = \bigcap_{n >0}
  K_n\R_\GG= \bigcap_{n >0} K'_n\R_\GG=\bigcap_{n >0}
  K'_nX_n\R_\GG$. Conversely, let $t=x_0x_1 \ldots$ with $x_0 \cdots
  x_n \in K$ for all $n$.  Observe that we must have infinitely many $n$ such
  that $x_0 \cdots x_m \in K_n$ for some $m$, since $K'_n$ is
  prefix-free. Thus, $t \in V_n$ for infinitely many $n$ and $t \in
  U_n$ for all $n$.

  To conclude, we show that if $L= \set{\sqcup C}{C \subseteq K,
    \, C \text{ coherent}}$ for some $K$, and $L \subseteq \R_\GG$ is
  recognizable, then $L$ is the language of a deterministic
  asynchronous B\"uchi automaton. We assume as above that $\max(t)
  \subseteq \GG$ for all $t \in K$. Since $L$ is recognizable, there
  is some deterministic Muller automaton $\Aa$ with acceptance condition
  $\Ff$ and  $\Ll(\Aa)=L$. We
  may also assume that on every finite trace $t$  the states of
  processes from $\dom(\max(t))$ reached on $t$ 
  determine the states of all other processes. First we test for every
  $T \in\Ff$ if there is some trace from $\R_\GG$ accepted with
  $T$. Without restriction this is the case for all $T\in\Ff$.
  For each $T$ we can determine a reachable state 
  $s(T) \in \prod_{\alp\in\GG} T_\alp$ and finite traces $t_0(T),t(T)$
  with $\max(t_0(T), \max(t(T))
  \subseteq \GG$ such that (1)  $t_0(T)$ leads from the initial state
  to $s(T)$, (2) $t(T)$ is a loop on state $s(T)$ and (3) the
  set of $\alp$-states in the loop $t(T)$ is precisely $T_\alp$. In addition,
  $t_0(T)$ is connected.

  We claim that $\Aa$ accepts $L$ with the following (B\"uchi)
  condition: a trace is accepted if for some $T\in\Ff$, every
  state from $T_\alp$ occurs infinitely often, for every $\alp\in
  \GG$. It is clear that all of $L$ is accepted in this way by
  $\Aa$. Conversely, let $x$ be an arbitrary trace with $\max(x) \subseteq
  \GG$ and looping on state
  $s(T)$.  We have $t_0\, t(T)^\oo \in L$, so there is some $n_0$ and
  $u_0$ in $K$ such that $u_0 \le t_0 \, t(T)^{n_0}$. Since $t_0 \,t(T)^{n_0}x \,t(T)^\oo
  \in L$ we find some $n_1$ such that $u_1 \le t_0\, t(T)^{n_0}x \,t(T)^{n_1}$ for
  some $u_1 \in K$ with $u_0 <u_1$. In this way we can build a
  trace $t$ from $\R_\GG$, $t=t_0 \,t(T)^{n_0}x \, t(T)^{n_1}x \cdots$,
  with $t=\sqcup_{n\ge 0} u_n \in \set{\sqcup C}{C \subseteq K, \, C
    \text{ coherent}}$ and such that for each $\alp\in\GG$, the set of
  states from $S_\alp$ repeated infinitely often is a superset of
  $T_\alp$. The claim follows since $L=\set{\sqcup C}{C \subseteq K, \, C
    \text{ coherent}}$.

\end{proof}

\begin{remark}
  For the previous proof we do not need the connectedness
assumption in the definition of $\R_\GG$. On the other hand, it is
open whether without this assumption all
$\PGd \cap \PFs$ sets are still \locmon.
\end{remark}

\section*{Conclusion}

Our aim in this paper was to propose a reasonable notion of
distributed monitoring for asynchronous systems. We argued that 
distributed monitors should have the same structure as the system that
is monitored. We showed that properties over $\GG$-infinite
traces that are deterministic and co-deterministic, are \locmon. It
would be interesting to consider alternative restrictions to
$\GG$-infinite traces, that capture some reasonable (partial) knowledge about the
asynchronous system and for which $\PGd \cap \PFs$ sets are \locmon.

\end{document}